%%%%%%%%%%%%%%%%%%%%%%%%%%%%%%%%%%%%%%%%%%%%%%%%%%%%%%%%%%%%%%%%%%%%%%%%%%%%%%%%
%2345678901234567890123456789012345678901234567890123456789012345678901234567890
%        1         2         3         4         5         6         7         8

\documentclass[letterpaper, 10 pt, conference]{ieeeconf}  % Comment this line out
                                                          % if you need a4paper
%\documentclass[a4paper, 10pt, conference]{ieeeconf}      % Use this line for a4
                                                          % paper

\IEEEoverridecommandlockouts                              % This command is only
                                                          % needed if you want to
                                                          % use the \thanks command
\overrideIEEEmargins
% See the \addtolength command later in the file to balance the column lengths
% on the last page of the document

% The following packages can be found on http:\\www.ctan.org
%\usepackage{graphics} % for pdf, bitmapped graphics files
%\usepackage{epsfig} % for postscript graphics files
%\usepackage{mathptmx} % assumes new font selection scheme installed
%\usepackage{times} % assumes new font selection scheme installed
%\usepackage{amsmath} % assumes amsmath package installed
%\usepackage{amssymb}  % assumes amsmath package installed

\usepackage{grffile}
\usepackage{amsmath,amssymb,amsbsy}
\usepackage{hyperref}
\usepackage{amsthm} % assumes new font selection scheme installed
\usepackage{color}
\usepackage{comment}
\usepackage{lipsum}% http://ctan.org/pkg/lipsum
\usepackage{graphicx}%
\usepackage{subfigure}
\usepackage[normalem]{ulem}
\usepackage{enumerate}
\usepackage{epstopdf}
\usepackage{bbm}
\newtheorem{theorem}{Theorem}[section]
\newtheorem{lemma}[theorem]{Lemma}
\newtheorem{definition}[theorem]{Definition}	
\newtheorem{proposition}[theorem]{Proposition}
	
\newtheorem{remark}[theorem]{Remark}	
\newtheorem{problem}[theorem]{Problem}
\newtheorem{example}[theorem]{Example}

\title{\LARGE \bf
Mean-Field Controllability and Decentralized Stabilization of Markov Chains, Part I: Global Controllability and Rational Feedbacks
}

%\author{ \parbox{3 in}{\centering Huibert Kwakernaak*
%         \thanks{*Use the $\backslash$thanks command to put information here}\\
%         Faculty of Electrical Engineering, Mathematics and Computer Science\\
%         University of Twente\\
%         7500 AE Enschede, The Netherlands\\
%         {\tt\small h.kwakernaak@autsubmit.com}}
%         \hspace*{ 0.5 in}
%         \parbox{3 in}{ \centering Pradeep Misra**
%         \thanks{**The footnote marks may be inserted manually}\\
%        Department of Electrical Engineering \\
%         Wright State University\\
%         Dayton, OH 45435, USA\\
%         {\tt\small pmisra@cs.wright.edu}}
%}

\author{Karthik Elamvazhuthi, Vaibhav Deshmukh, Matthias Kawski, and Spring Berman% <-this % stops a space
\thanks{This work was supported by National Science Foundation (NSF) Award CMMI-1436960 and by
ONR Young Investigator Award N00014-16-1-2605.}% <-this % stops a space
\thanks{Karthik Elamvazhuthi, Vaibhav Deshmukh, and Spring Berman are with the School for Engineering of
Matter, Transport and Energy, Arizona State University, Tempe, AZ, 85281
USA {\tt\small \{karthikevaz, vdeshmuk, Spring.Berman\}@asu.edu}.}
\thanks{Matthias Kawski is with the School of Mathematical and Statistical Sciences,
Arizona State University, Tempe, AZ, 85281 USA {\tt\small \{kawski@asu.edu\}}.%
}}

\begin{document}

\maketitle
\thispagestyle{empty}
\pagestyle{empty}

%%%%%%%%%%%%%%%%%%%%%%%%%%%%%%%%%%%%%%%%%%%%%%%%%%%%%%%%%%%%%%%%%%%%%%%%%%%%%%%%
\begin{abstract}
In this paper, we study the  controllability  and stabilizability properties of the Kolmogorov forward equation of a continuous time Markov chain (CTMC) evolving on a finite state space, using the transition rates as the control parameters. Firstly, we prove small-time local and global controllability from and to strictly positive equilibrium configurations when the underlying graph is strongly connected. Secondly, we show that there always exists a locally exponentially stabilizing decentralized linear (density-)feedback law that takes zero value at equilibrium and respects the graph structure, provided that the transition rates are allowed to be negative and the desired target density lies in the interior of the set of probability densities. For bidirected graphs, that is, graphs where a directed edge in one direction implies an edge in the opposite direction, we show that this linear control law can be realized using a decentralized rational feedback law of the form $k(\mathbf{x}) = a(\mathbf{x})+ b(\mathbf{x})\frac{f(\mathbf{x})}{g(\mathbf{x})}$ that also respects the graph structure and  control constraints (positivity and zero at equilibrium). This enables the possibility of using Linear Matrix Inequality (LMI) based tools to algorithmically construct decentralized density feedback controllers for stabilization of a robotic swarm to a target task distribution with no task-switching at equilibrium, as we demonstrate with several numerical examples.
\end{abstract}

% swarm stabilization

%%%%%%%%%%%%%%%%%%%%%%%%%%%%%%%%%%%%%%%%%%%%%%%%%%%%%%%%%%%%%%%%%%%%%%%%%%%%%%%%
\section{INTRODUCTION}
\label{sec:intro}

In recent years, there has been considerable work on approaches to task allocation for a large number of homogeneous robots that switch stochastically between tasks at tunable transition rates \cite{berman2009optimized,martinoli2004modeling,mather2014synthesis}. In these approaches, the robots' states evolve according to a continuous time Markov chain, and their task distribution is controlled using the corresponding mean-field model. This method enables scalable control design due to independence of the control methodology from agent numbers.  It has many applications in robotics, such as environmental monitoring, surveillance, disaster response, and autonomous construction.

Multiple approaches have been proposed in the literature for control synthesis in this framework. Optimal stabilization of the Kolmogorov forward equation using time-invariant constant inputs was considered in \cite{berman2009optimized}. Optimal control using time-varying control parameters has been addressed in several different contexts such as control of swarms \cite{bandyopadhyay2016probabilistic,demir2015decentralized}, mean-field games \cite{gomes2013continuous}, and optimal transport \cite{solomon2016continuous}. To improve convergence rates to the stationary distribution, these control approaches have also been extended to the case where the density of the swarm is fed back to the agents \cite{demir2015decentralized,mather2014synthesis}. Since density feedback requires global information, these works have also considered decentralized control approaches either by {\it a priori} restricting the controller to have a decentralized structure \cite{mather2014synthesis} or by designing a centralized controller and then using estimation algorithms to estimate the global density of the swarm in a decentralized manner \cite{demir2015decentralized}. Here, by {\it decentralized} we mean that each agent's controller or estimation parameters depend only on information that the agent can obtain from its local environment.
%local spatial information.

In this paper, we make two contributions to the mean-field control problem. First, we study local and global controllability properties of the forward equation when the control inputs are required to be zero at equilibrium. The case when control inputs are not constrained to be zero  at equilibrium is comparatively much easier, since local controllability follows directly from linearization based arguments, so we do not consider this case here. Second, we address the stabilization of mean-field models using decentralized feedback under the constraint that the transition rates are required to be zero at equilibrium. Such a constraint is needed in swarm robotic applications to prevent robots from constantly switching between states at equilibrium. We have shown that when this constraint is not imposed, a large class of target distributions (target densities with strongly connected supports) can be stabilized even without any density feedback \cite{biswal2016sos}.

The problem of unnecessary task-switching at equilibrium was previously addressed for CTMCs in \cite{mather2014synthesis} as a variance control problem, and for DTMCs in \cite{bandyopadhyay2016probabilistic} using a decentralized density estimation strategy that implements centralized  feedback laws and ensures that the transition matrix is the identity matrix at equilibrium. In this paper, we investigate the CTMC case in more detail. In contrast to \cite{mather2014synthesis}, we explicitly show that any (strictly positive) distribution is stabilizable using a decentralized feedback law, and we impose the additional constraint that transition rates must be zero at equilibrium. Moreover, the controller in \cite{mather2014synthesis} was proven to be stabilizing under the assumption that negative transition rates are admissible, and was then implemented with a saturation condition in order to avoid negative rates, in which case the stability guarantees are lost. We show how this issue can be resolved by interpreting a negative flow from one state to another as a positive flow of appropriate magnitude in the opposite direction.

\section{NOTATION}

We first define the notation that will be used to formulate the problems addressed in this paper. We denote by $\mathcal{G} = (\mathcal{V},\mathcal{E})$ a directed graph with a set of $M$ vertices, $\mathcal{V} = \lbrace 1,2,...,M \rbrace$, and a set of $N_{\mathcal{E}}$ edges, $  \mathcal{E} \subset \mathcal{V} \times \mathcal{V}$. An edge from vertex $i \in \mathcal{V}$ to vertex $j \in \mathcal{V}$ is denoted by $e = (i,j) \in \mathcal{E}$. We define a source map $S : \mathcal{E} \rightarrow \mathcal{V}$  and a target map  $T: \mathcal{E} \rightarrow \mathcal{V}$ for which $S(e) = i$ and $T(e) = j$ whenever $e = (i,j) \in \mathcal{E}$. There is a {\it directed path} of length $s$ from node $i\in \mathcal{V}$ to node $j\in \mathcal{V}$ if there exists a sequence of edges $\{e_i\}^s_{i=1}$ in $\mathcal{E}$ such that $S(e_1) = i$, $T(e_s) = j$, and $S(e_k)=T(e_{k-1})$ for all $1\leq k < s-1$. A directed graph $\mathcal{G}=(\mathcal{V},\mathcal{E})$ is called {\em strongly connected} if for every pair of distinct vertices $v_0,\,v_T\in \mathcal{V}$, there exists a {\em directed path} of edges in $\mathcal{E}$ connecting $v_0$ to $v_T$. We assume that $(i,i) \notin \mathcal{E}$ for all $i \in \mathcal{V}$. The graph  $\mathcal{G}$ is said to be {\it bidirected} if $e \in \mathcal{E}$ implies that $\tilde{e} = (T(e), S(e))$ also lies in $\mathcal{E}$.

We denote the $M$-dimensional Euclidean space by $\mathbb{R}^M$. $\mathbb{R}^{M \times N}$ will refer to the space of $M \times N$ matrices, and $\mathbb{R}_+$ will refer to the set of positive real numbers. Given a vector $\mathbf{x}  \in \mathbb{R}^M$, $x_i$ will refer to the $i^{th}$ coordinate value of $\mathbf{x}$. The $2-$norm of the vector $\mathbf{x}  \in \mathbb{R}^M$ is denoted by $\|\mathbf{x}\|_2 = \sqrt{\sum_i x^2_i}$. For a matrix $\mathbf{A} \in \mathbb{R}^{M \times N}$, $A^{ij} $ will refer to the element in the $i^{th}$ row and $j^{th}$ column of $\mathbf{A}$. For a subset $B \subset \mathbb{R}^M$, ${\rm int}(B)$ will refer to the interior of the set $B$.

\section{PROBLEM FORMULATION}

We consider a population of $N$ autonomous agents that must reallocate among a set of states, such as tasks that must be performed in different spatial regions, to achieve a target population distribution at equilibrium. Each agent has a finite state space $\mathcal{V}$, where each vertex in $\mathcal{V}$ represents a different state. The edges in $\mathcal{E}$ define the possible agent transitions between vertices. Denoting the set of admissible control inputs by $U \subset \mathbb{R}$, the agents' transition rules are determined by the control parameters $u_{e}:[0,\infty) \rightarrow U$ for each $e \in \mathcal{E}$, also known as the {\it transition rates} of the associated CTMC. An agent at state $i$ at time $t$ decides to switch to state $j$ at probability per unit time $u_{e}(t)$, $e = (i,j)$. We focus on the case where $U \subset \mathbb{R}_+$, since transition rates must always be positive for a CTMC.

The state of each agent $i \in \{1,...,N\}$ is defined by a stochastic process $X_i(t)$ that evolves on the state space $\mathcal{V}$ according to the conditional probabilities %  $\lbrace X_i(t) \rbrace_{i=1}^N$ and each $i  \in  \lbrace 1,2 ....N \rbrace$
\begin{equation}
\label{eq:marko}
\mathbb{P}\left(X_i(t+h) = T(e) | X_i(t) = S(e)\right) =~ u_{e}(t)h + o(h)
\end{equation}
for each $e \in \mathcal{E}$.  Here, $o(h)$ is the little-oh symbol and $\mathbb{P}$ is the underlying probability measure induced on the space of events $\Omega$ (which will be left undefined, as is common) by the stochastic processes $\lbrace X_i(t) \rbrace_{i=1}^N $. Let  $\mathcal{P}(\mathcal{V}) = \lbrace \mathbf{y} \in \mathbb{R}^M_+; ~\sum_v y_v = 1 \rbrace$ be the simplex of pro

bability densities on $\mathcal{V}$. Corresponding to the CTMC is a set of ordinary differential equations (ODEs) which determines the time evolution of the probability densities $\mathbb{P}(X_i(t) = v) = x_v(t) \in \mathbb{R}_+$. Since $\lbrace X_i \rbrace_{i=1}^N $ is a set of independent and identically distributed random variables, the {\it Kolmogorov forward equation} can be represented by a single linear system of ODEs,
\begin{eqnarray}
\label{eq:ctrsys}
\dot{\mathbf{x}}(t) &=& \sum_{ e\in \mathcal E} u_e(t) \mathbf{B}_e \mathbf{x}(t), \hspace{3mm} t \in [0, \infty), \\ \nonumber
\mathbf{x}(0) &=& \mathbf{x}^0 \in \mathcal{P}(\mathcal{V}),
\end{eqnarray}
where $\mathbf{B}_e$ are control matrices whose entries are given by
\[
 B_e^{ij} =
  \begin{cases}
   -1 & \text{if } i = j=  S(e),\\
    1 & \text{if } i= T(e), \hspace{1mm} j = S(e),\\
   0       & \text{otherwise.}
  \end{cases}
\]

The focus of this paper is to study controllability and stabilizability properties of the control system \eqref{eq:ctrsys}. To describe the controllability problem of interest, we first recall some controllability notions from nonlinear control theory \cite{bloch2015nonholonomic}.
\begin{definition}
Given $U \subset \mathbb{R}$ and $\mathbf{x}^0  \in \mathcal{P}(\mathcal{V})$, we define $R^{U}(\mathbf{x}^0,t)$ to be the set of all $\mathbf{y} \in  \mathcal{P}(\mathcal{V})$ for which there exists an admissible control, $\mathbf{u} = \lbrace u_e \rbrace_{e \in \mathcal{E}}$, taking values in $U$ such that there exists a trajectory of  system (\ref{eq:ctrsys}) with $\mathbf{x}(0) = \mathbf{x}^0$, $\mathbf{x}(t) = \mathbf{y}$. The \textbf{reachable set from $\mathbf{x}_0$ at time $T$} is defined to be
\begin{equation}
R_T^U(\mathbf{x}^0) = \cup_{0 \leq t \leq T} R^U(\mathbf{x}^0,t).
\end{equation}
\end{definition}

\begin{definition}
\label{def1}
The system \eqref{eq:ctrsys} is said to be {\bf small-time locally controllable (STLC)} from an equilibrium configuration $\mathbf{x}^{eq} \in \mathcal{P}(\mathcal{V})$ if the set of reachable states $R_T^U(\mathbf{x}^{eq})$ contains a neighborhood of $\mathbf{x}^{eq} \in \mathcal{P}(\mathcal{V})$ in the subspace topology of $\mathcal{P}(\mathcal{V})$ (as a subset of $\mathbb{R}^M$) for any $T>0$.
\end{definition}
Here, we have defined local controllability in terms of the subspace topology of $\mathcal{P}(\mathcal{V})$.  This is because the set $\mathcal{P}(\mathcal{V})$ is invariant for the system \eqref{eq:ctrsys} of controlled ODEs, and hence one cannot expect controllability to a full neighborhood of $\mathbf{x}^{eq}$. Informally, this just means that, due to conservation of mass, one cannot create or destroy agents by manipulating their rates of transitioning from one vertex to another.

Our first problem of interest can be framed as follows:

\begin{problem}
Given $\mathbf{x}^{eq} \in \mathcal{P}(\mathcal{V})$, determine if the system  \eqref{eq:ctrsys} is STLC from $\mathbf{x}^{eq} \in \mathcal{P}(\mathcal{V})$.
\end{problem}

Next, we consider the feedback stabilization problem for system \eqref{eq:ctrsys}. Consider the following system:
\begin{eqnarray}
\label{eq:fctrsys}
\dot{\mathbf{x}}(t)&=& \sum_{ e\in \mathcal E} k_e(\mathbf{x}) \mathbf{B}_e \mathbf{x}(t), \hspace{3mm} t \in [0, \infty), \\ \nonumber
\mathbf{x}(0) &=& \mathbf{x}^0 \in \mathcal{P}(\mathcal{V}).
\end{eqnarray}
\begin{problem}

Given $\mathbf{x}^{eq} \in \mathcal{P}(\mathcal{V})$, determine whether there exists a decentralized feedback law, defined as a collection of maps $\tilde{k}_e: \mathbb{R}^2 \rightarrow \mathbb{R}_+$ where $k_e(\mathbf{y}) = \tilde{k}_e(y_{S(e)},y_{T(e)})$ for each $e \in \mathcal{E}$ and $\mathbf{y} \in \mathbb{R}^M$,
%\spr{Why $\mathbf{y}$ and not $\mathbf{x}$?}\kar{we are defining the map here...y is a dummy variable..x is already used to define the states in equation 2 and 4...so I am not sure if it is okay to use it here..I think either way there is no room for confusion..so we could change it if you want..}
such that for the closed-loop system \eqref{eq:fctrsys}, $\mathbf{x}^{eq}$ is asymptotically stable and $k_e(\mathbf{x}^{eq})=0$ for each $e \in \mathcal{E}$.
\end{problem}

Due to the dependence of the feedback control law $\lbrace k_e \rbrace_{e \in \mathcal{E}}$ on the probability densities $x_v$, the independence of the stochastic processes $\lbrace X_i(t) \rbrace_{i=1}^N$ is lost. Hence, for a finite number of agents, the time evolution of the probability densities $x_v$ cannot be described by a system of ODEs on $\mathbb{R}^M$ such as \eqref{eq:ctrsys}.  However, system \eqref{eq:fctrsys}
%\spr{and not \eqref{eq:ctrsys}?}\kar{no,  those exactly determine how the conditional probabilities evolve. We don't need the mean field hypothesis there. The corresponding forward equation when interactions are present is the chemical master equation, I think.}
represents the evolution of the probability densities in the sense of the {\it mean-field hypothesis}.  That is, we take the limit $N \rightarrow \infty$ to obtain the population density $x_v(t) = \lim_{N \rightarrow \infty} \sum_{i=1}^N \frac{ \mathbbm{1}_v(X_i(t))}{N}$ for each $v\in \mathcal{V}$, where $ \mathbbm{1}_v:\mathcal{V} \rightarrow \lbrace 0,1 \rbrace$ is the indicator function of $v$. See \cite{kolokoltsov2011markov}[Chapter 5] for more details.

\section{ANALYSIS}

\subsection{Controllability}

In this section, we investigate the controllability properties of the system \eqref{eq:ctrsys}.

\begin{proposition}
\label{MK1}
If the graph $\mathcal{G}=(\mathcal{V},\mathcal{E})$ is not strongly
connected, then the system \eqref{eq:ctrsys} is not locally controllable.
\end{proposition}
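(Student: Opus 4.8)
The plan is to produce a scalar quantity that is \emph{monotone} along every admissible trajectory, so that the reachable set is trapped in a half-space and therefore cannot contain a full relative neighborhood of the equilibrium. The natural candidate is the total mass carried by an \emph{absorbing} set of vertices, i.e.\ a proper subset of $\mathcal{V}$ that has no outgoing edges, and failure of strong connectivity is exactly what guarantees such a set exists. Concretely, since $\mathcal{G}$ is not strongly connected, there are vertices $a,b\in\mathcal{V}$ with no directed path from $a$ to $b$. I would let $C\subseteq\mathcal{V}$ be the set of all vertices reachable from $a$ (including $a$ itself). Then $a\in C$ while $b\notin C$, so $C$ is a nonempty proper subset, and $C$ is forward-closed: if $v\in C$ and $(v,w)\in\mathcal{E}$, then $w$ is reachable from $a$, so $w\in C$. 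Hence no edge $e\in\mathcal{E}$ has $S(e)\in C$ and $T(e)\notin C$.

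Next I would track $\phi(\mathbf{x})=\sum_{v\in C}x_v$ along solutions of \eqref{eq:ctrsys}. From the explicit form of $\mathbf{B}_e$, the term $u_e\mathbf{B}_e\mathbf{x}$ subtracts $x_{S(e)}$ from coordinate $S(e)$ and adds it to coordinate $T(e)$, scaled by $u_e$. Summing the contributions, edges internal to $C$ (or to its complement) cancel, edges leaving $C$ are absent by forward-closedness, and only edges \emph{entering} $C$ survive:
$$\frac{d}{dt}\sum_{v\in C}x_v(t)=\sum_{\substack{e\in\mathcal{E}:\; S(e)\notin C,\\ T(e)\in C}} u_e(t)\,x_{S(e)}(t)\;\ge\;0.$$
The inequality uses $u_e\ge 0$ (admissible transition rates are nonnegative) together with invariance of $\mathcal{P}(\mathcal{V})$, which ensures $x_{S(e)}(t)\ge 0$. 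Thus $\phi$ is non-decreasing along every trajectory, so one can never reach states where $\phi$ is strictly smaller.

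The conclusion then follows by a topological argument. For any admissible control, every $\mathbf{y}\in R_T^U(\mathbf{x}^{eq})$ satisfies $\phi(\mathbf{y})\ge\phi(\mathbf{x}^{eq})$. Because $C$ is a proper nonempty subset, $\phi$ restricts to a non-constant affine functional on $\mathcal{P}(\mathcal{V})$ (it equals $1$ at vertices $e_i$ with $i\in C$ and $0$ at those with $i\notin C$). Hence, for an interior equilibrium $\mathbf{x}^{eq}$ (the case of interest for the positive-density results), every relative neighborhood of $\mathbf{x}^{eq}$ in $\mathcal{P}(\mathcal{V})$ contains points with $\phi<\phi(\mathbf{x}^{eq})$, and these are unreachable. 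This contradicts the requirement in Definition~\ref{def1} that $R_T^U(\mathbf{x}^{eq})$ contain such a neighborhood, so the system is not STLC.

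I expect the main obstacle to be conceptual rather than computational: identifying the correct invariant structure, namely a genuinely proper absorbing vertex set obtained by the forward-closure construction, and being careful to invoke invariance of the simplex, since monotonicity of $\phi$ relies on \emph{both} $u_e\ge 0$ and nonnegativity of the state. The one remaining delicate point is the topological step that a one-sided affine constraint rules out a full relative neighborhood; this is immediate for interior equilibria but would warrant a short separate remark for boundary equilibria, where $\phi$ may already attain an extremal value and the argument must be adapted.
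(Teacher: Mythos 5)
Your proof is correct and takes essentially the same approach as the paper: both extract from the failure of strong connectivity a nonconstant affine output function that is nondecreasing along every admissible trajectory (using $u_e \geq 0$ together with nonnegativity of the state), so the reachable set is trapped in a half-space and cannot contain a relative neighborhood. The only difference is minor: the paper uses the two-sided functional $\varphi(\mathbf{x})=\sum_{v\in\mathcal{V}_2}x_v-\sum_{v\in\mathcal{V}_1}x_v$, combining the forward closure of $v_2$ with the backward closure of $v_1$, which obstructs controllability at more boundary configurations than your single forward-closed set $\phi(\mathbf{x})=\sum_{v\in C}x_v$ does --- exactly the boundary caveat you correctly flagged, and immaterial for interior equilibria.
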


\begin{proof}
Suppose that $\mathcal{G}=(\mathcal{V},\mathcal{E})$ is not strongly
connected. Then there exist vertices $v_1,\,v_2\in \mathcal{V}$ such
that there does not exist a path in $\mathcal{E}$ from $v_2$ to $v_1$.
Let $\mathcal{V}_1$ be the subset of vertices $v\in \mathcal{V}$ such
that $v=v_1$ or there exists a path in $\mathcal{E}$
from $v$ to $v_1$. Analogously,
let $\mathcal{V}_2$ be the subset of vertices $v\in \mathcal{V}$ such
that $v=v_2$ or there exists a path in $\mathcal{E}$
from $v_2$ to $v$.
Since there does not exist a path in $\mathcal{E}$ from $v_2$ to $v_1$,
it is clear that $\mathcal{V}_1$ and $\mathcal{V}_2$ are disjoint
and both are nonempty.
Then the output function $\varphi\colon \mathcal{P}(\mathcal{V})\mapsto \mathbb{R}$
defined by
\begin{equation}
\varphi (x)= \sum_{v\in \mathcal{V}_2}x_v- \sum_{v\in \mathcal{V}_1}x_v
\end{equation}
is nondecreasing along every solution curve of the
system~\eqref{eq:ctrsys}, which therefore is not locally controllable.
\end{proof}

\begin{proposition}
\label{MK2}
If the graph $\mathcal{G}=(\mathcal{V},\mathcal{E})$ is strongly
connected, then the system \eqref{eq:ctrsys} is STLC from every point in ${\rm int}(\mathcal{P}(\mathcal{V}))$.
\end{proposition}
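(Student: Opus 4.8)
The plan is to show that at any interior point the nonnegative-control velocity set already fills the entire tangent space of the simplex, so that small-time local controllability follows from a first-order argument and no genuinely higher-order (Lie-bracket) information is needed. Throughout I would work in the tangent space $H = \{\mathbf{v}\in\mathbb{R}^M : \sum_v v_v = 0\}$ of the affine hull of $\mathcal{P}(\mathcal{V})$, noting that each control vector field $f_e(\mathbf{x}) := \mathbf{B}_e\mathbf{x} = x_{S(e)}(\mathbf{e}_{T(e)} - \mathbf{e}_{S(e)})$ takes values in $H$, which is why $\mathcal{P}(\mathcal{V})$ is invariant as already observed.

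The key lemma is: for $\mathbf{x}^{eq}\in \mathrm{int}(\mathcal{P}(\mathcal{V}))$ the convex cone generated by $\{f_e(\mathbf{x}^{eq}) : e\in\mathcal{E}\}$ is all of $H$, equivalently $\mathbf{0}\in \mathrm{int}_H\,\mathrm{conv}\{f_e(\mathbf{x}^{eq})\}$. To prove it I would show that every elementary direction $\mathbf{e}_i - \mathbf{e}_j$ ($i\neq j$) lies in this cone. Fix such a pair; by strong connectivity there is a simple directed path $j = p_0 \to p_1 \to \cdots \to p_r = i$ in $\mathcal{E}$. Activating only the edges of this path, with rates $u_s \ge 0$ on $(p_{s-1},p_s)$, gives the velocity $\mathbf{v} = \sum_{s=1}^r u_s\, x^{eq}_{p_{s-1}}(\mathbf{e}_{p_s}-\mathbf{e}_{p_{s-1}})$. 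Choosing the rates to enforce flux balance at every interior vertex of the path, $u_{m+1} = u_m\, x^{eq}_{p_{m-1}}/x^{eq}_{p_m}$ for $m=1,\dots,r-1$ — possible with strictly positive rates precisely because $\mathbf{x}^{eq}$ is interior — all intermediate coefficients cancel and $\mathbf{v}$ telescopes to a positive multiple of $\mathbf{e}_i - \mathbf{e}_j$. Since this works for every ordered pair, the cone contains both $\mathbf{v}$ and $-\mathbf{v}$ for a basis of $H$, hence equals $H$.

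With the lemma in hand, the conclusion is a first-order (convexified-system) statement: pick finitely many constant controls $u^{(\ell)}\in\mathbb{R}^{N_{\mathcal E}}_+$ whose velocities $w_\ell = \sum_e u^{(\ell)}_e f_e(\mathbf{x}^{eq})$ satisfy $\mathbf{0}\in\mathrm{int}_H\,\mathrm{conv}\{w_\ell\}$. Then the endpoint map $(t_1,\dots,t_L)\mapsto$ (concatenation of the corresponding constant-control flows) carries a neighborhood of the origin in the nonnegative orthant onto a neighborhood of $\mathbf{x}^{eq}$ in $\mathcal{P}(\mathcal{V})$, and by rescaling the durations this can be arranged within any prescribed horizon $T>0$. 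I would justify this by the standard local controllability result for control-affine systems whose velocity set contains the origin in the interior of its convex hull (or, via the relaxation theorem, since the relaxed reachable set is open and contains $\mathbf{x}^{eq}$ while the true reachable set is dense in it).

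The main obstacle is the key lemma, that is, confronting the positivity constraint $U\subset\mathbb{R}_+$: a single edge vector field can only push mass one way, so the naive velocity cone of the raw generators is a proper subcone of $H$. The whole point is that strong connectivity lets one simulate the reversed flow $-f_e$ to first order by routing mass back along a return path held in flux balance; verifying that this balance is simultaneously achievable with strictly positive rates (using interiority of $\mathbf{x}^{eq}$) and that the resulting velocity telescopes exactly is the crux. As a sanity check I would confirm consistency with Proposition~\ref{MK1}: when $\mathcal{G}$ fails to be strongly connected the return path is unavailable for some pair, the cone collapses to a half-space, and its supporting functional is exactly the monotone $\varphi$ exhibited there.
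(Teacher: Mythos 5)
Your proposal is correct, but it takes a genuinely different route from the paper's. The paper argues by explicit construction: it fixes a closed path $\gamma=(e_1,\ldots,e_s)$ visiting every vertex, circulates a buffer mass $\rho>0$ around $\gamma$ one edge at a time using piecewise constant controls, and solves the endpoint map in closed form (Equation \eqref{coron}) so that the terminal state is exactly $\mathbf{x}^0+\Delta\mathbf{x}$ for every zero-sum $\Delta\mathbf{x}\in[-\rho/M,\rho/M]^M$. You instead prove a first-order statement: the convex cone generated by $\lbrace \mathbf{B}_e\mathbf{x}^{eq}\rbrace_{e\in\mathcal{E}}$ is all of the tangent space $H$, via the flux-balanced path combination that telescopes to a positive multiple of $\mathbf{e}_i-\mathbf{e}_j$ (your recurrence $u_{m+1}=u_m x^{eq}_{p_{m-1}}/x^{eq}_{p_m}$ is exactly the right balance, with strict positivity supplied by interiority of $\mathbf{x}^{eq}$), and then invoke the classical sufficient condition that $\mathbf{0}$ in the relative interior of the convexified velocity set yields STLC. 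This neatly answers the paper's own remark that ``classical results on STLC do not apply directly'': the obstruction cited there concerns $\mathbf{0}$ not being interior to the \emph{control} set, whereas the correct first-order hypothesis concerns the \emph{velocity} set, which your lemma verifies. What each approach buys: yours is shorter, conceptual, and dispenses with the paper's $\delta_i,\sigma_i$ bookkeeping; the paper's construction is quantitative --- it reaches every perturbation of $\ell^\infty$-size $\rho/M$, with $\rho$ half the distance to the boundary --- and that uniform estimate is precisely what the proof of Theorem \ref{MK0} consumes when concatenating steps along the segment from $\mathbf{x}^0$ to $\mathbf{x}^T$, so to feed into the global result your method would need an added compactness/uniformity remark. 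Two small cautions: your parenthetical fallback via the relaxation theorem is your weakest link, since density of true endpoints in an open relaxed reachable set does not by itself produce a neighborhood --- stay with your primary corner open-mapping (positive-span) argument on the composed-flows endpoint map $(t_1,\ldots,t_L)\mapsto e^{t_Lg_L}\cdots e^{t_1g_1}\mathbf{x}^{eq}$, which is sound; and your phrase ``the naive velocity cone of the raw generators is a proper subcone of $H$'' is literally false under strong connectivity, since your own lemma shows that cone (nonnegative combinations of the raw generators) is all of $H$ --- what is true is that each \emph{individual} generator is irreversible.
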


Before proving the proposition, it is helpful to take a closer
look at the relations in the Lie algebra of the control vector fields,
and the corresponding product on the semi-group generated
by their exponentials.

Suppose that $e=(i,j),e'=(k,\ell) \in \mathcal{E}$ are two edges.
If $\{i,j\},\{k,\ell\}\subseteq \mathcal{V}$ are disjoint,
then the control matrices $\mathbf{B}_{(i,j)}$ and $\mathbf{B}_{(k,\ell)}$
commute, and hence so do their exponentials.
If $k=j$ and $\ell\neq j$, then $\mathbf{B}_{(i,j)} \mathbf{B}_{(j,\ell)}=\mathbf{0}$,
and the commutator evaluates to
\begin{equation}
[\mathbf{B}_{(j,\ell)},\mathbf{B}_{(i,j)}]=\mathbf{B}_{(j,\ell)} \mathbf{B}_{(i,j)}=\mathbf{B}_{(i,j)}-\mathbf{B}_{(i,\ell)}.
\end{equation}
%From here it is not hard to see
From this, we can conclude that if the graph
is strongly connected, then the Lie algebra spanned
by the control vector fields $f_e\colon \mathbf{x}\mapsto \mathbf{B}_e \mathbf{x}$
spans the tangent space $T_\mathbf{x} (\mathcal{P}(\mathcal{V}))$ at every point $\mathbf{x}\in {\rm int}(\mathcal{P}(\mathcal{V}))$. However, since in our case $\mathbf{0}$ is not an interior point
of the convex hull of admissible control values $\mathbf{u}\in [0,\infty)^M$,
classical results on STLC do not apply directly.

For any edge $e=(i,j)\in \mathcal{E}$, the exponential
of the control matrix $\mathbf{B}_e$ is a stochastic matrix
with entries given by
\begin{equation}
(\exp t\mathbf{B}_e)_{k\ell} = \left\{ \begin{array}{cl}
1&\mbox{ if } k=\ell\neq S(e) \\
e^{-t}&\mbox{ if } k=\ell=S(e) \\
1-e^{-t}&\mbox{ if } k=T(e) \mbox{ and } \ell=S(e) \\
0&\mbox{ otherwise.}
\end{array} \right.
\end{equation}

Rather than writing out a general formula for the corresponding
product on the group for general edges $(i,j),(j,\ell)\in \mathcal{E}$,
we only state the product for the special case of
$\mathcal{V}=\{1,2,3\}$ and edges $e=(1,2)$ and $e'=(2,3)$: 
%% \spr{Why just this case?}\kar{I think it's only meant to give a flavor of what the products looks like. All the other pairwise products look similar or are block diagonal}
\begin{equation}
e^{t\mathbf{B}_{e'}}e^{s\mathbf{B}_e}\!=\!
\left( \begin{array}{ccc}
e^{-s} & 0 & 0\! \\
e^{-t}(1-e^{-s}) & e^{-t} & 0\! \\
(1-e^{-t})(1-e^{-s}) & 1-e^{-t} & 1\!
\end{array} \right).
\end{equation}

\begin{proof} (of Proposition \ref{MK2}).
Suppose that the graph $\mathcal{G}=(\mathcal{V},\mathcal{E})$
is strongly connected.
Fix an arbitrary point $\mathbf{x}^0\in {\rm int}(\mathcal{P}(\mathcal{V}))$.
Then there exists $\rho >0$ such that each coordinate
$x^0_i>2\rho$.
Let $\Delta \mathbf{x}\in [-\rho/M ,\rho/M ]^M$ be arbitrary but fixed such
that $\sum_{v\in \mathcal{V}} \Delta x_v =0$.
Let the final time $T>0$ be arbitrary but fixed.
We explicitly construct a piecewise constant control
$\mathbf{u} \colon [0,T] \mapsto [0,\infty)^{N_\mathcal{E}}$ that
steers the system \eqref{eq:ctrsys} from $\mathbf{x}^0$ at time $0$
to $\mathbf{x}^0+\Delta \mathbf{x}$ at time $T$.

Let $v_0 \in \mathcal{V}$ be arbitrary but fixed.
As a consequence of Proposition \ref{MK1}, there exists a path $\gamma=(e_1,\ldots, e_s)$ of edges in $\mathcal{E}$
that connects $v_0=S(e_1)$ back to $T(e_s)=v_0$ and which
{\em visits} every vertex $v\in \mathcal{V}$ at least once.
For $1\leq i\leq s$, let $v_i=T(e_i)$.
Let $\Delta t = T/s$.
Define the finite sequence $\{\delta_i\}_{i=1}^s\in \{0,1\}^{s}$
by $\delta_i=1$ if for all $i<j<s$, $v_j\neq v_i$, i.e.
the edge $e_i\in \gamma$ is the last edge whose source is
$S(e_i)=v_i$.
This sequence ensures that a control variation in the direction
of $x_v$ is only taken along the last edge that starts at $v$.
Finally, define a finite sequence $\{\sigma_i\}_{i=0}^s\in [-\rho,\rho]^{s}$
that keeps track of the accumulated control variations, where
%% $\sigma_1=\delta_0\Delta x_{v_0}$ and
%by $\sigma_0=0$ and
\begin{equation}
\label{sigmadef}
\sigma_0=0, ~~~~\sigma_i = \sum_{j=1}^{i} \delta_j\Delta x_{v_{j-1}}, ~ 1<i\leq s.
\end{equation}
Note that if the path $\gamma$ is a Hamiltonian cycle, then
$s=M$, $\delta_i = 1$ for all $i=1,...,s$, % is a constant equal to $1$,
and $\sigma _i= \sum_{j=1}^{i} \Delta x_{v_{j-1}}$,
which simplifies the formula \eqref{coron} below.

%\spr{Is this sentence in the right place?:} 
To distinguish between the two cases where $S(e_i)=v_i=v_0$ and $S(e_i)=v_i\neq v_0$, we introduce the vector $\mathbf{y}^0 \in \mathbb{R}^M$ by setting
$y^0_{v_0}=x_{v_0}-\rho$ and $y^0_{v_i}=x_{v_i}$ if $v_i\neq v_0$.
%$y^0_{v_0}=x_v(0)-\rho$ and $y^0_{v_i}=x_{v_i}$ if $v_i\neq v_0$.
Consider the piecewise constant control
$\mathbf{u} \colon [0,T] \mapsto [0,\infty)^{N_\mathcal{E}}$ that is defined
on each interval $t\in [i\Delta t,(i+1)\Delta t)$, $0\leq i <s$, as 
\begin{equation}
\label{coron}
u_{e_i}(t)=
-\frac{1}{\Delta t}\log \left( 1-\frac{\rho-\sigma_i}{y^0_{v_i}+\rho-\sigma_{i-1}}
\right) %\label{eq:ueicontrol}
\end{equation}
and $u_e(t) \equiv 0$ for all $e\neq e_i$. % \spr{Is this condition correct?}. \kar{I think this is :$u_e (t) \equiv 0$ for all $e\neq e_i$ }
%% on the first interval for $t\in [0,\Delta t)$ by
%% \begin{equation}
%% u_{e_1}(t)=
%% -\frac{1}{\Delta t}\log \left(1-\frac{\rho-\sigma_1}{x^0_{v_0}} \right)
%% \end{equation}
%% and $u_e\equiv 0$ for all $e\neq e_1$.

The key idea in this construction is that the much simpler
control obtained by setting $\Delta \mathbf{x}=\mathbf{0}$ in definition \eqref{sigmadef} successfully moves
a mass $\rho>0$ from $v_0$ along the path $\gamma$ and
back to $v_0$.
It is critical that the %$e_i$
component $u_{e_i}$ of the control be strictly positive on the $i^{th}$ interval of time,
which enables the application of classical signed control variations to this component on the interval.
The explicit introduction of the nonzero $\Delta \mathbf{x}$ then
allows the following endpoint map to be solved explicitly:  %\spr{Why is there an arrow over the product symbol?}
% directed product
\begin{equation}
\stackrel{\longleftarrow}{\prod_{1\leq i\leq s}}
\exp \left(\Delta t u_{e_i} \mathbf{B}_{e_i} \right) \cdot \mathbf{x}^0= \mathbf{x}^0+\Delta  \mathbf{x},
\end{equation}
resulting in Equation \eqref{coron} for the control.
\end{proof}

%\begin{example}
%\label{MK3}
%This is an example of a strongly connected directed graph in which paths
%that visit every vertex must revisit many vertices many times.
%Let $m,n\in \mathbb{Z}^+$ be fixed.
%Consider a graph $(\mathcal{V},\mathcal{E})$ of $m+n$ distinct vertices
%$\mathcal{V}=\{v_i\colon i\leq m\}\cup\{w_j\colon j\leq n\}$
%and the set of edges defined by
%\begin{eqnarray}
%\mathcal{E}&=&\{(v_i,v_{i+1})\colon 1\leq i <m\}
%\\
%&& \nonumber
%\cup
%\{(v_m,w_j)\colon 1\leq j <n\} \cup
%\{(w_j,v_1)\colon 1\leq j <n\}.
%\end{eqnarray}
%Picture this as a regular $(m+1)$-gon in which the first vertex $v_0=1$
%has been replaced by $n$ {\em parallel} vertices $w_j$ each of which is
%(only) connected from $v_m$ and to $v_1$. Clearly the minimal length of
%a path that visits every vertex and which is a closed cycle (ending at
%its starting point) must have at least $n(m+1)$ edges.
%\end{example}

Since the size of the achievable  $\Delta \mathbf{x}$ is bounded by half the
distance $\rho$ of the starting point from the boundary of
the simplex, and the control set does not contain $\mathbf{u}=\mathbf{0}$ in its interior,
the sizes of the small-time reachable sets are not immediately apparent.
%it is not immediately how large the small-time reachable sets are.
However, the following holds:

\begin{theorem}
\label{MK0}
If the graph
$\mathcal{G}=(\mathcal{V},\mathcal{E})$ is strongly
connected, then the system \eqref{eq:ctrsys} is small-time globally
controllable from every point in the interior of the simplex $\mathcal{P}(\mathcal{V})$.
\end{theorem}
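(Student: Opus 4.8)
The plan is to upgrade the small-time \emph{local} controllability of Proposition \ref{MK2} to a global statement by a connectedness argument, exploiting two structural features of \eqref{eq:ctrsys}: the control set $U=[0,\infty)$ is a cone, and the right-hand side is homogeneous of degree one in $\mathbf u$. The latter yields a time-rescaling symmetry: if $\mathbf u(\cdot)$ steers $\mathbf x^0$ to $\mathbf y$ on $[0,T']$, then $\tilde{\mathbf u}(t)=\lambda\,\mathbf u(\lambda t)$ steers $\mathbf x^0$ to $\mathbf y$ on $[0,T'/\lambda]$ and is again admissible for every $\lambda>0$. Hence anything reachable is reachable in arbitrarily small time, so it suffices to prove ordinary global controllability on $\mathrm{int}(\mathcal P(\mathcal V))$, i.e. that the forward reachable set $F(\mathbf x^0):=\bigcup_{t>0}R^U(\mathbf x^0,t)$ contains $\mathrm{int}(\mathcal P(\mathcal V))$.

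First I would record that $F(\mathbf x^0)\cap\mathrm{int}(\mathcal P(\mathcal V))$ is relatively \emph{open}: if $\mathbf y$ is interior and reachable from $\mathbf x^0$, then by Proposition \ref{MK2} applied at $\mathbf y$ an entire relative neighborhood of $\mathbf y$ is reachable from $\mathbf y$, and concatenating controls (the semigroup property of the flow) shows this neighborhood is reachable from $\mathbf x^0$. The essential extra ingredient is a \emph{uniform} version of Proposition \ref{MK2}: on any compact $K\subset\mathrm{int}(\mathcal P(\mathcal V))$ the coordinates are bounded below by some $2\rho_0>0$, and the explicit construction \eqref{coron} then reaches, from every $\mathbf p\in K$ and in arbitrarily small time, every $\mathbf q\in\mathcal P(\mathcal V)$ with $\|\mathbf q-\mathbf p\|_\infty\le \rho_0/M$. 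Thus the radius of the reachable neighborhood may be taken uniform on $K$ and independent of the horizon $T$.

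With this uniformity in hand I would close the argument by showing $F(\mathbf x^0)\cap\mathrm{int}(\mathcal P(\mathcal V))$ is also relatively \emph{closed}, so that, being nonempty, open, and closed in the connected set $\mathrm{int}(\mathcal P(\mathcal V))$, it equals the whole interior. For closedness, take reachable points $\mathbf y_n\to\mathbf y\in\mathrm{int}(\mathcal P(\mathcal V))$; applying the uniform neighborhood estimate on a compact neighborhood of $\mathbf y$ shows that for $n$ large $\mathbf y$ lies in the small-time reachable neighborhood of $\mathbf y_n$, and concatenation with a control from $\mathbf x^0$ to $\mathbf y_n$ exhibits $\mathbf y$ as reachable. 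Equivalently, one may chain the uniform local steps along the straight segment $[\mathbf x^0,\mathbf y]$, which stays in the interior by convexity, reaching $\mathbf y$ exactly in finitely many steps and in arbitrarily small total time.

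The main obstacle is precisely this local-to-global passage. The usual route relies on reversibility, but here reachability is \emph{not} symmetric: negative rates are inadmissible, so the flows $\exp(t\mathbf B_e)$ cannot be run backwards within the control class. What rescues the argument is that Proposition \ref{MK2} delivers an entire relative neighborhood of each interior point — a genuinely two-sided local statement — so forward reachability alone percolates across the connected interior once the neighborhood sizes are controlled uniformly. Quantifying that uniformity from the explicit controls \eqref{coron}, and simultaneously verifying that it persists for arbitrarily small horizons $T$ (which is what makes the conclusion \emph{small-time}), is the step requiring the most care. Reachability of boundary configurations is not claimed, since emptying a vertex requires infinite time.
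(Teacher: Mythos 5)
Your proof is correct, but your primary route differs from the paper's, whose entire argument is in fact your closing ``equivalently'' sentence: the paper sets $\rho=\frac{1}{2}\min\{x^0_v,x^T_v\colon v\in\mathcal{V}\}$, $L=\|\mathbf{x}^T-\mathbf{x}^0\|_1$, $N=\mathrm{ceil}(L/\rho)$, partitions the straight segment at the points $\mathbf{y}^k=\mathbf{x}^0+\frac{k}{N}(\mathbf{x}^T-\mathbf{x}^0)$, and applies the explicit construction of Proposition~\ref{MK2} on each window $[\frac{kT}{N},\frac{(k+1)T}{N}]$; convexity keeps every coordinate along the segment bounded below by $2\rho$, which is exactly the uniformity you isolate on compacts, and since each local step of Proposition~\ref{MK2} works for an arbitrary prescribed horizon, the small-time claim comes for free with no rescaling and no topological argument. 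Your main argument instead runs through an open-and-closed percolation in the connected set $\mathrm{int}(\mathcal{P}(\mathcal{V}))$ combined with the homogeneity symmetry $\tilde{\mathbf{u}}(t)=\lambda\,\mathbf{u}(\lambda t)$, and both ingredients are sound: the rescaled control stays in the cone $[0,\infty)^{N_{\mathcal{E}}}$, and your quantitative uniform radius is legitimately extracted from \eqref{coron}, since the achievable box $[-\rho/M,\rho/M]^M$ in the proof of Proposition~\ref{MK2} depends only on the coordinate lower bound and not on $T$ or the base point --- a necessary observation, because the bare statement of Proposition~\ref{MK2} gives only a neighborhood of unspecified size. As for what each approach buys: the paper's proof is fully constructive (piecewise constant, one-edge-at-a-time controls with an explicit step count) but leans on convexity of the simplex to furnish a straight connecting path with uniformly interior coordinates, whereas your version is nonconstructive in the closedness step but more general --- it applies verbatim to any connected open invariant region on which local controllability holds uniformly on compacts, with no need for a straight-line path --- and the cone/homogeneity rescaling cleanly decouples ``globally controllable'' from ``in small time.'' One bookkeeping point in your favor: the paper's count $N=\mathrm{ceil}(L/\rho)$ is slightly optimistic relative to the $\ell_\infty$-box of radius $\rho/M$ that Proposition~\ref{MK2} actually guarantees (taking $N=\mathrm{ceil}(ML/\rho)$ repairs it), and your step size $\|\mathbf{q}-\mathbf{p}\|_\infty\le\rho_0/M$ is the more careful accounting of the same construction.
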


\begin{proof}
Suppose that $T>0$ and $\mathbf{x}^0, \mathbf{x}^T\in {\rm int}(\mathcal{P}(\mathcal{V}))$.
Let $\rho=\frac{1}{2} \min \{  \mathbf{x}^0_v, \mathbf{x}^T_v \colon v\in \mathcal{V}\}$,
$L=\| \mathbf{x}^T- \mathbf{x}^0\|_1$, and $N={\rm ceil}(L/\rho)$.
Partition the straight-line segment from $\mathbf{x}^0$ to $\mathbf{x}^T$ into $N$ segments,
e.g. with endpoints $\mathbf{y}^k= \mathbf{x}^0+\frac{k}{N}(\mathbf{x}^T- \mathbf{x}^0)$ for $0\leq k \leq N$.
Using Proposition \ref{MK2}, there exist controls $u^k\colon [\frac{kT}{N},\frac{(k+1)T}{N}]\mapsto [0,\infty)^{N_\mathcal{E}}$
that successively steer the system from $\mathbf{x}^k$ to $\mathbf{x}^{k+1}$.
Thus, the concatenation of these controls steers the system
from $\mathbf{x}^0$ to $\mathbf{x}^T$ in time $T$ using piecewise constant
controls that take values only on the axes of $(\mathbb{R}_+)^{M}$.
%We note that this crude effort to track the line segment,
%to stay within a $\rho$-tube of the line segment
%generally requires more control effort than first
%using cheap controls to steer from $x^0$ to near the center
%$(\frac{1}{M},\frac{1}{M},\ldots \frac{1}{M})$, and then using
%variable, but increasingly smaller $\rho^k$ that depend
%the $1$-norm of new way points $y_k$.
\end{proof}

It would be desirable to extend the above result to target distributions that lie on the boundary of $\mathcal{P}(\mathcal{V})$, for which agent population densities in some states are zero at equilibrium.  However, as we demonstrate in the following example, one cannot expect to reach target distributions on the boundary in finite time. The boundary points of $\mathcal{P}(\mathcal{V})$ are unreachable if the system starts from ${\rm int}(\mathcal{P}(\mathcal{V}))$, even if one uses possibly unbounded but measurable inputs with finite Lebesgue integrals.
\begin{example}
Consider the forward equation for a two-vertex bidirected graph,
\begin{eqnarray}
\dot{x}_1(t) &=&  -u_{(1,2)}(t)x_1(t)+u_{(2,1)}x_2(t), \label{eq:twovertex} \\
\dot{x}_2(t) &=&  u_{(1,2)}(t)x_1(t)-u_{(2,1)}x_2(t), \nonumber \\
&& \hspace{-6mm} x_1(0) = x^0_{1}, ~~x_2(0) = x^0_{2}. \nonumber
\end{eqnarray}
Let $u_{(1,2)},u_{(2,1)} \in L^1_+(0,1)$, the set of positive-valued measurable inputs with finite integrals over the time interval $(0,1)$. Then the solution, $\mathbf{x}(t) = [x_1(t) \hspace{2mm} x_2(t)]^T$, satisfies:
\begin{eqnarray}
x_1(t) = x_{1}^0-\int_0^t(u_{(1,2)}(\tau)x_1(\tau)-u_{(2,1)}(\tau)x_2(\tau))d\tau, \label{eq:ineq1} \\
x_2(t) = x_{2}^0 + \int_0^t(u_{(1,2)}(\tau)x_1(\tau)-u_{(2,1)}(\tau)x_2(\tau))d\tau, \label{eq:ineq2}
\end{eqnarray}
such that $x_{1}^0 \in (0,1)$ and $x_{2}^0 = 1-x_{1}^0$.
% We can show that inequality \eqref{eq:ineq1} is true \spr{Should this justification go before the inequality?} by
We assume, without loss of generality, that $x_1(t)>0$ for all $t \in [0,1)$.  Then for each $T \in [0,1)$, Equations \eqref{eq:ineq1} and \eqref{eq:ineq2} imply that:
\begin{eqnarray}
x_1(T) \hspace{-2mm} &=& \hspace{-2mm} x^0_{1} -  \nonumber \\
 && \hspace{-2mm} \int_0^T \left(u_{(1,2)}(\tau)+ u_{(2,1)}(\tau) -\frac{u_{(2,1)}(\tau)}{x_1(\tau)}\right)x_1(\tau)d\tau. \nonumber
\end{eqnarray}
From this equation, we can conclude that
\begin{eqnarray}
\label{eq:ineq1b}
\hspace{-4mm} x_1(1)
&\geq& x_{1}^0 -\int_0^1 (u_{(1,2)}(\tau)+u_{(2,1)}(\tau)\tilde{x}_1(\tau))d\tau \\ \nonumber
&\hspace{2mm}& = \exp{\left(-\int_0^1(u_{(1,2)}(\tau)+u_{(2,1)}(\tau)) d\tau \right)} x_{1}^0,
\end{eqnarray}
where $\tilde{x}_1$ is the solution of the differential equation
\begin{eqnarray}
\label{eq:lstctr}
\dot{\tilde{x}}_1(t) &=& -(u_{(1,2)}(t)+u_{(2,1)}(t))\tilde{x}_1(t), \\
 \tilde{x}_1(0) &=& x_{1}^0. \nonumber
\end{eqnarray}
Therefore, it must be true that $\exp{(-\int_0^1(u_{(1,2)}(\tau)+u_{(2,1)}(\tau))d \tau}) x_{1}^0$ $\leq 0$, which yields a contradiction since ${x}_{1}^0 \neq 0$.
\end{example}

The above observation is not a significant disadvantage, since each point on the boundary of $\mathcal{P}(\mathcal{V})$ is at least {\it asymptotically controllable}, a result that we prove in \cite{biswal2016sos}.
% We prove this result on asymptotic controllability of the boundary points of $\mathcal{P}(\mathcal{V})$ in \cite{biswal2016sos}.
% \spr{${\rm int}(\mathcal{P}(\mathcal{V}))$? boundary of $\mathcal{P}(\mathcal{V})$?}

\subsection{Stabilization}

Now we investigate the stabilizability properties of the system \eqref{eq:ctrsys}. Note that stabilizability using centralized feedback follows from the controllability result in Theorem \ref{MK0}. Hence, our focus in this section is to establish stabilizability using decentralized control laws.

\begin{lemma}
\label{linstab}
Let $\mathcal{G}$ be strongly connected, and define $\mathbf{x}^{eq} \in {\rm int}(\mathcal{P}(\mathcal{V}))$. For each $e \in \mathcal{E}$ and each $\mathbf{y} \in \mathbb{R}^M$, let $k_e:\mathbb{R}^M \rightarrow (-\infty, \infty)$ be given by $k_e(\mathbf{y})=  x^{eq}_{T(e)}y_{S(e)} -x^{eq}_{S(e)}y_{T(e)} $ in system $\eqref{eq:fctrsys}$. Then, $\mathbf{x}^{eq}$
%is the equilibrium point of system $\eqref{eq:fctrsys}$(?)} \kar{I don't think this is needed either} and
is locally exponentially stable on the space $\mathcal{P}(\mathcal{V})$.  That is, there exists $r > 0$ such that $\| \mathbf{x}^0 - \mathbf{x}^{eq} \|_2 < r$ and $\mathbf{x}^0 \in\mathcal{P}(\mathcal{V})$ imply that the solution $\mathbf{x}(t)$ of system \eqref{eq:fctrsys} satisfies the following inequality,
\begin{equation}
\|\mathbf{x}(t) - \mathbf{x}^{eq}\|_2 \leq M_0 e^{-\lambda t},
\label{eq:expdec}
\end{equation}
for all $t \in [0, \infty)$ and for some parameters $M_0>0$ and $\lambda >0$ that depend only on $r$.
\end{lemma}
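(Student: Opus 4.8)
The plan is to apply Lyapunov's indirect method (linearization) on the affine hyperplane that contains the simplex. First I would verify that $\mathbf{x}^{eq}$ is genuinely an equilibrium of the closed loop: since $k_e(\mathbf{x}^{eq}) = x^{eq}_{T(e)}x^{eq}_{S(e)} - x^{eq}_{S(e)}x^{eq}_{T(e)} = 0$ for every $e$, the right-hand side of \eqref{eq:fctrsys} vanishes at $\mathbf{x}^{eq}$. I would also record that the vector field $F(\mathbf{x}) := \sum_{e} k_e(\mathbf{x})\mathbf{B}_e\mathbf{x}$ is tangent to the hyperplane $H = \{\mathbf{x} : \mathbf{1}^\top \mathbf{x} = 1\}$, because $\mathbf{1}^\top \mathbf{B}_e = \mathbf{0}$ for each $e$; hence $H$ is invariant and, locally near the interior point $\mathbf{x}^{eq}$, coincides with $\mathcal{P}(\mathcal{V})$, so it suffices to analyze the restricted flow on $H$ with tangent space $T = \{\mathbf{z} : \mathbf{1}^\top \mathbf{z} = 0\}$.

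Next I would compute the Jacobian $\mathbf{A} := DF(\mathbf{x}^{eq})$. Because each $k_e$ is linear and $k_e(\mathbf{x}^{eq})=0$, the product-rule term involving $\mathbf{B}_e\mathbf{z}$ drops out, leaving $\mathbf{A}\mathbf{z} = \sum_{e} k_e(\mathbf{z})\,\mathbf{B}_e\mathbf{x}^{eq}$. Writing $\pi_v := x^{eq}_v > 0$ and reading off entries, each edge $e=(a,b)$ contributes $\pi_a^2$ to the off-diagonal $A^{ab}$, $\pi_a\pi_b$ to $A^{ba}$, together with the matching negative diagonal terms. The key structural observation is that $\mathbf{A}$ is exactly the transpose of a continuous-time Markov chain generator, equivalently the generator of a Kolmogorov forward equation: its off-diagonal entries are nonnegative and its columns sum to zero. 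Moreover each original edge induces strictly positive off-diagonal entries in \emph{both} $A^{ab}$ and $A^{ba}$, so the strong connectivity of $\mathcal{G}$ makes this associated chain irreducible.

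I would then invoke the spectral theory of irreducible Markov generators: zero is a simple eigenvalue of $\mathbf{A}$, with associated right eigenvector the strictly positive stationary distribution $\boldsymbol{\mu}$, and every other eigenvalue has strictly negative real part. The crucial point for stability on the simplex is that the zero-eigendirection is transverse to $T$: since $\mathbf{1}^\top\boldsymbol{\mu}\neq 0$ we have $\boldsymbol{\mu}\notin T$, while $\mathbf{1}^\top\mathbf{A}=\mathbf{0}$ gives $\mathbf{A}(T)\subseteq T$. Thus $\mathbb{R}^M = T\oplus\mathrm{span}(\boldsymbol{\mu})$ is an $\mathbf{A}$-invariant splitting, and the spectrum of the restriction $\mathbf{A}|_T$ consists precisely of the nonzero eigenvalues of $\mathbf{A}$, so $\mathbf{A}|_T$ is Hurwitz.

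Finally, writing $\mathbf{x} = \mathbf{x}^{eq} + \mathbf{z}$ with $\mathbf{z}\in T$ turns the restricted flow into $\dot{\mathbf{z}} = \mathbf{A}\mathbf{z} + g(\mathbf{z})$ with $g(\mathbf{z}) = O(\|\mathbf{z}\|_2^2)$, and since $\mathbf{A}|_T$ is Hurwitz the standard linearization theorem yields constants $r, M_0, \lambda > 0$ for which the claimed exponential estimate \eqref{eq:expdec} holds whenever $\|\mathbf{z}(0)\|_2 < r$. For $r$ small the decaying trajectory never leaves a ball around the interior point $\mathbf{x}^{eq}$, so positivity (membership in $\mathcal{P}(\mathcal{V})$, not merely $H$) is automatic despite the feedback rates $k_e$ possibly taking negative values. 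I expect the main obstacle to be the bookkeeping in the Jacobian computation --- correctly identifying $\mathbf{A}$ as an irreducible forward generator --- together with the transversality argument that isolates the benign zero eigenvalue from the genuinely stable block $\mathbf{A}|_T$; once these are in place the conclusion follows from standard theory.
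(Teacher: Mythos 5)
Your proof is correct, and its core coincides with the paper's: you linearize the closed loop at $\mathbf{x}^{eq}$, your Jacobian $\mathbf{A}\mathbf{z}=\sum_{e}k_e(\mathbf{z})\,\mathbf{B}_e\mathbf{x}^{eq}$ agrees entrywise with the paper's $\mathbf{G}=\sum_{e\in\mathcal{E}}\mathbf{A}_e$ (each edge $(a,b)$ contributing $\pi_a^2$ to the $(a,b)$ entry and $\pi_a\pi_b$ to the $(b,a)$ entry, with the matching negative diagonal terms), and both arguments hinge on recognizing this matrix as an irreducible transition rate matrix, so that $0$ is a simple eigenvalue and the remaining spectrum lies in the open left half-plane. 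Where you genuinely diverge is in handling the resulting non-hyperbolicity. The paper invokes the local stable manifold theorem: there is an $(M-1)$-dimensional stable manifold tangent to $\mathcal{P}(\mathcal{V})$ at $\mathbf{x}^{eq}$, and invariance of the mass level sets $\lbrace \mathbf{y}:\sum_i y_i=c\rbrace$ forces it to lie in $\mathcal{P}(\mathcal{V})$, from which the result is said to follow. You instead factor out the benign direction before applying any nonlinear theorem: the $\mathbf{A}$-invariant splitting $\mathbb{R}^M = T\oplus\mathrm{span}(\boldsymbol{\mu})$, with $T$ the zero-sum subspace and $\boldsymbol{\mu}$ the positive kernel vector, shows $\mathbf{A}|_T$ is Hurwitz, and the restricted flow on the invariant hyperplane is an ordinary $(M-1)$-dimensional system with a hyperbolic equilibrium, so the classical Lyapunov indirect method applies directly. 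Your route is more elementary (no stable manifold theorem) and arguably tighter: the paper's conclusion implicitly needs the further observation that the $(M-1)$-dimensional stable manifold, sitting inside the $(M-1)$-dimensional invariant hyperplane, is a full relative neighborhood of $\mathbf{x}^{eq}$ in $\mathcal{P}(\mathcal{V})$ --- exactly the step your reduction renders automatic. You also explicitly note that a trajectory confined to a small ball about the interior point $\mathbf{x}^{eq}$ stays in $\mathcal{P}(\mathcal{V})$ itself, not merely in the hyperplane, which is needed for the statement as formulated since the rates $k_e$ may be negative and the usual positive-system invariance argument is unavailable; the paper leaves this point implicit as well.
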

\begin{proof}
We use linearization to establish local exponential stability. Consider the vector field $\mathbf{f}^e= [f^e_1 ~ f^e_2~...~f^e_M]^T$ given by
%\begin{eqnarray}
%\mathbf{f}^e_i (\mathbf{y})&=& -(\mathbf{x}^{eq}_{T(e)}\mathbf{y}_{S(e)}- \mathbf{x}^{eq}_{S(e)}\mathbf{y}_{T(e)}) \mathbf{y}_{S(e)} ; i = S(e) \\ \nonumber
%&=& (\mathbf{x}^{eq}_{T(e)}\mathbf{y}_{S(e)}- \mathbf{x}^{eq}_{S(e)}y_{T(e)}) \mathbf{y}_{S(e)}
%; i = T(e) \\ \nonumber
%&=& 0 ;\hspace{2mm} otherwise
%\end{eqnarray}
\[
f^e_i (\mathbf{y}) =
  \begin{cases}
   -(x^{eq}_{T(e)}y_{S(e)}- x^{eq}_{S(e)}y_{T(e)}) y_{S(e)} & \text{if }  i = S(e),\\
    (x^{eq}_{T(e)}y_{S(e)}- x^{eq}_{S(e)}y_{T(e)}) y_{S(e)} & \text{if } i = T(e), \\
   0       & \text{otherwise}
  \end{cases}
\]
for each $\mathbf{y} \in \mathbb{R}^M$. Then for each $e \in \mathcal{E}$, we define the matrix $\mathbf{A}_{e} \in \mathbb{R}^M \times \mathbb{R}^M$ as follows:
\[
A_e^{ij} =
  \begin{cases}
\frac{\partial f^e_{S(e)}}{\partial y_{S(e)}} \Bigr|_{\mathbf{y}=\mathbf{x}^{eq}}  = -x^{eq}_{T(e)}x^{eq}_{S(e)} & \hspace{-1mm} \text{if } i = j  = S(e), \\
 \frac{\partial f^e_{S(e)}}{\partial y_{T(e)}} \Bigr|_{\mathbf{y} =\mathbf{x}^{eq}} = (x^{eq}_{S(e)})^2 & \hspace{-2mm} \text{if } i=S(e),j  =T(e), \\
\frac{\partial f^e_{T(e)}}{\partial y_{T(e)}} \Bigr|_{\mathbf{y} =\mathbf{x}^{eq}} = -(x^{eq}_{S(e)})^2 & \hspace{-2mm} \text{if }  i= j  =T(e), \\
\frac{\partial f^e_{T(e)}}{\partial y_{S(e)}} \Bigr|_{\mathbf{y}=\mathbf{x}^{eq}} = x^{eq}_{T(e)}x^{eq}_{S(e)} & \hspace{-2mm} \text{if }   i=T(e),j  = S(e), \\
 0 \hspace{2mm} & \hspace{-2mm} \text{otherwise. }
  \end{cases}
\]

Now we define the matrix $\mathbf{G} \in \mathbb{R}^{M \times M}$ as $\mathbf{G} = \sum_{e\in \mathcal{E}}\mathbf{A}_e$. Note that $G^{S(e)T(e)} > 0$ for each $e \in \mathcal{E}$, since $\mathbf{x}^{eq} \in {\rm int}(\mathcal{P}(\mathcal{V}))$. Moreover, $\mathbf{1}^T \mathbf{G} = \mathbf{0}$, and the off-diagonal terms of $\mathbf{G}$ are positive. Hence, $\mathbf{G}$ is an irreducible transition rate matrix.  It is a classical result that this implies that $\mathbf{G}$ has its principal eigenvalue at $0$, which is simple. The other eigenvalues of $\mathbf{G}$ lie in the open left-half of the complex plane. However, note that the equilibrium point $\mathbf{x}^{eq}$ is {\it non-hyperbolic}, since the principal eigenvalue of $\mathbf{G}$ is at $0$. Hence, local exponential stability of the nonlinear system does not immediately follow. However, it follows that there exists an $(M-1)-$dimensional local stable manifold of the system that is tangential to $\mathcal{P}(\mathcal{V})$ at $\mathbf{x}^{eq} \in \mathcal{P}(\mathcal{V})$. Noting that the set $\lbrace \mathbf{y} \in \mathbb{R}^M ; \sum_{i=1}^M y_i = c \rbrace$ is invariant for solutions of the system \eqref{eq:fctrsys} for any $c \in \mathbb{R}$, it follows that the stable manifold is in fact in $\mathcal{P}(\mathcal{V})$. From this, the result follows.
\end{proof}

The above lemma implies that if negative transition rates are admissible, then there exists a linear feedback law, $\lbrace k_e\rbrace _{e\in \mathcal{E}}$, such that $k_e(\mathbf{x}^{eq}) = 0$ for each $e \in \mathcal{E}$ and the desired equilibrium point is locally exponentially stable.

In the above lemma, we have only established local exponential stability. We can also show global stability using an appropriate Lyapunov function, as we do in \cite{biswal2016sos}. We exclude this proof due to space constraints.

A desirable property of the control system \eqref{eq:ctrsys} is that stabilization of the desired equilibrium can be achieved using a linear feedback law that satisfies positivity constraints away from equilibrium and is zero at equilibrium. However, any stabilizing linear control law that is zero at equilibrium must in fact be zero everywhere. On the other hand, in the next theorem we show that whenever $\mathcal{G}$ is bidirected, any feedback control law that violates positivity constraints can be implemented using a rational feedback law of the form $k(\mathbf{x}) = a(\mathbf{x})+ b(\mathbf{x})\frac{f(\mathbf{x})}{g(\mathbf{x})}$, such that $k(\mathbf{x})$ satisfies the positivity constraints and is zero at equilibrium.

\begin{theorem}
\label{ratio}
Let $\mathcal{G}$ be a bidirected graph. Let $k_e:\mathbb{R}^M \rightarrow (-\infty, \infty)$ be a map for each $e \in \mathcal{E}$ such that there exists a unique global solution of the system \eqref{eq:fctrsys}. Additionally, assume that $\mathbf{x}(t) \in {\rm int}(\mathcal{P}(\mathcal{V}))$ for each $t\in [0,\infty)$. Consider the functions $m_e^p:\mathbb{R}^M\rightarrow \lbrace 0,1 \rbrace$ and $m^n_e:\mathbb{R}^M \rightarrow \lbrace -1,0 \rbrace$, defined as follows for each $e \in \mathcal{E}$:
\begin{eqnarray}
m^p_e(\mathbf{y}) &=& ~1 \hspace{2mm} \mbox{ if } \hspace{2mm} k_e(\mathbf{y}) \geq 0, %\nonumber \\
 ~~~0 \hspace{1mm} \mbox{ otherwise;} \nonumber \\
m^n_e(\mathbf{y}) &=& \hspace{-1mm} -1 \hspace{2mm} \mbox{ if } \hspace{2mm} k_e(\mathbf{y}) \leq 0, %\nonumber \\
 ~~~0 \hspace{1mm} \mbox{ otherwise.}
\end{eqnarray}
Let $c_e:\mathbb{R}^M \rightarrow [0,\infty)$ be given by
\begin{equation}
c_e(\mathbf{y}) = m^p_e(\mathbf{y})k_e(\mathbf{y}) - m^n_{\tilde{e}}(\mathbf{y}) k_{\tilde{e}}(\mathbf{y})\frac{y_{S(e)}}{y_{T(e)}}.
\end{equation}
Then the solution $\tilde{\mathbf{x}}(t)$ of the following system,
\begin{eqnarray}
\dot{\tilde{\mathbf{x}}}&=& \sum_{ e\in \mathcal E} c_e(\tilde{\mathbf{x}}(t)) \mathbf{B}_e \tilde{\mathbf{x}}(t), \hspace{3mm} t \in [0, \infty), \nonumber
 \\ \tilde{\mathbf{x}}(0) &=& \mathbf{x}^0 \in \mathcal{P}(\mathcal{V}), \label{eq:ctrsysmod}
\end{eqnarray}
is unique, defined globally, and satisfies $\tilde{\mathbf{x}}(t) = \mathbf{x}(t)$ for all $t \in [0,\infty)$.
\end{theorem}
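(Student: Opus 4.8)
The plan is to prove the theorem by showing that the (possibly discontinuous) vector field driving \eqref{eq:ctrsysmod} agrees on ${\rm int}(\mathcal{P}(\mathcal{V}))$ with the one driving \eqref{eq:fctrsys}, so that the given trajectory $\mathbf{x}(t)$ of \eqref{eq:fctrsys} is automatically a solution of \eqref{eq:ctrsysmod}, and then to upgrade this into the uniqueness and global-existence claims. Concretely, it suffices to establish the pointwise identity
\[
\sum_{e\in\mathcal{E}} c_e(\mathbf{y}) \mathbf{B}_e \mathbf{y} = \sum_{e\in\mathcal{E}} k_e(\mathbf{y}) \mathbf{B}_e \mathbf{y}
\]
for every $\mathbf{y}\in{\rm int}(\mathcal{P}(\mathcal{V}))$; since $\mathbf{x}(t)$ stays in the interior for all $t$ by hypothesis, evaluating this identity along $\mathbf{x}(t)$ shows that $\mathbf{x}$ satisfies \eqref{eq:ctrsysmod} as well.

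First I would use bidirectedness of $\mathcal{G}$ to partition $\mathcal{E}$ into unordered reverse-pairs $\{e,\tilde e\}$, which is legitimate precisely because $e\in\mathcal{E}$ implies $\tilde e\in\mathcal{E}$. From the definition of $\mathbf{B}_e$, the pair $\{e,\tilde e\}$ affects $\sum_e k_e \mathbf{B}_e \mathbf{y}$ only in coordinates $S(e)$ and $T(e)$, contributing $-\Phi_e(\mathbf{y})$ to the $S(e)$ component and $+\Phi_e(\mathbf{y})$ to the $T(e)$ component, where $\Phi_e(\mathbf{y}) = k_e(\mathbf{y}) y_{S(e)} - k_{\tilde e}(\mathbf{y}) y_{T(e)}$ is the net flux along the pair. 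The analogous quantity for the modified system is $\Psi_e(\mathbf{y}) = c_e(\mathbf{y}) y_{S(e)} - c_{\tilde e}(\mathbf{y}) y_{T(e)}$, so the global identity reduces to the single scalar balance equation $\Psi_e=\Phi_e$ for each pair. One must then check that the definition of $c_e$ yields $\Psi_e=\Phi_e$: substituting $c_e$ and $c_{\tilde e}$, using $m_e^p k_e=\max(k_e,0)$ and $m_e^n k_e=-\min(k_e,0)$, clearing the denominators (here the interior hypothesis $y_{S(e)},y_{T(e)}>0$ is essential), and splitting into the four sign combinations of $(k_e,k_{\tilde e})$ confirms the balance. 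This is exactly the mechanism of reinterpreting a negative rate on one edge as a positive rate of the reciprocally rescaled magnitude on its reverse edge, and the same accounting shows simultaneously that $c_e\ge 0$. Both the nonnegativity and the validity of the division rest on the interior assumption together with bidirectedness, which are therefore the two load-bearing hypotheses.

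The remaining step, which I expect to be the main obstacle, is the uniqueness and global-existence assertion: the maps $c_e$ are in general discontinuous across the loci where $k_e$ or $k_{\tilde e}$ changes sign (where $m_e^p,m_e^n$ jump), so one cannot invoke Picard--Lindel\"of for \eqref{eq:ctrsysmod} directly, and the aggregate field must be exploited instead. I would argue that, by the identity just established, any solution $\tilde{\mathbf{x}}$ of \eqref{eq:ctrsysmod} with $\tilde{\mathbf{x}}(0)=\mathbf{x}^0$ also satisfies \eqref{eq:fctrsys} on every subinterval on which it remains interior, and hence coincides there with the unique global interior solution $\mathbf{x}$ assumed to exist. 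A continuation argument then prevents $\tilde{\mathbf{x}}$ from reaching the boundary: if $[0,\tau)$ were the maximal interval on which $\tilde{\mathbf{x}}$ stays interior, then $\tilde{\mathbf{x}}=\mathbf{x}$ on it, and continuity would force $\tilde{\mathbf{x}}(\tau)=\mathbf{x}(\tau)\in{\rm int}(\mathcal{P}(\mathcal{V}))$, contradicting maximality unless $\tau=\infty$. Hence $\tilde{\mathbf{x}}=\mathbf{x}$ on all of $[0,\infty)$, which yields global existence, uniqueness, and the equality $\tilde{\mathbf{x}}(t)=\mathbf{x}(t)$ at once.
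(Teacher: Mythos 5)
Your proposal is correct and takes essentially the same route as the paper: the paper's entire proof is the one-line observation that the right-hand sides of \eqref{eq:ctrsysmod} and \eqref{eq:fctrsys} coincide along the interior trajectory, and your edge-pair flux balance plus the continuation argument for uniqueness and global existence simply make explicit what that one line leaves implicit. One remark: if you actually carry out your four-sign-case check, you will find it fails for the expression as printed, since $-m^n_{\tilde e}(\mathbf{y})k_{\tilde e}(\mathbf{y}) = \min\left(k_{\tilde e}(\mathbf{y}),0\right) \leq 0$ makes the printed $c_e$ negative when $k_e<0$ and $k_{\tilde e}<0$; the balance $\Psi_e=\Phi_e$ and the nonnegativity $c_e\geq 0$ hold instead for $c_e(\mathbf{y}) = m^p_e(\mathbf{y})k_e(\mathbf{y}) + m^n_{\tilde e}(\mathbf{y})k_{\tilde e}(\mathbf{y})\,\frac{y_{T(e)}}{y_{S(e)}}$ (plus sign, inverted ratio), which is exactly the ``negative reverse flow as rescaled positive forward flow'' mechanism you describe --- so this is a typo in the statement rather than a gap in your argument.
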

\begin{proof}
This follows by noting that the right-hand sides of systems \eqref{eq:ctrsysmod} and \eqref{eq:fctrsys} are equal for all $t\geq 0$.
\end{proof}

\begin{remark}
In the above theorem, it is required that $\tilde{\mathbf{x}}(t) \in {\rm int}(\mathcal{P}(\mathcal{V}))$ for all $t \in [0,\infty)$. Such an assumption on the initial distribution $\mathbf{x}^0$ can be avoided if one uses polynomial feedback instead, as we show in \cite{biswal2016sos}.
\end{remark}

The above theorem can also be extended to the case of strongly connected graphs, since the cone spanned by the collection of vectors $\lbrace \mathbf{B}_e \mathbf{x} \rbrace_{e \in\mathcal{E}}$ is equal to the tangent space of $\mathcal{P}(\mathcal{V})$ at $\mathbf{x}$ whenever $\mathbf{x} \in {\rm int}(\mathcal{P}(\mathcal{V}))$.  Hence, negative flows can be realized using positive inputs of appropriate magnitude, possibly applied across more than one edge, depending on the length of the shortest directed path in the opposite direction. However, the resulting control law might not be decentralized in the sense that it might not respect the graph structure.
%\begin{remark}
%From the point of view of implementing the above control law, we are assuming that agents governed by the CTMC have spatial localization capabilities, that is, they know at each time instant which vertex they are in. This assumption can be removed if the desired equilibrium is dependent on some spatial parameter, $f:\mathcal{V} \rightarrow \mathbb{R}_{+}$ which can be directly measured by the robots. See \cite{mesquita2012jump} and \cite{elamvazhuthi2016coverage} for similar work when the agent dynamics are governed by stochastic differential equations, instead of CTMCs.
%\end{remark}

\section{COMPUTATIONAL METHOD FOR CONTROLLER DESIGN} %APPROACH}
\label{sec:comp}
In this section, we discuss how decentralized linear control laws can be algorithmically constructed for the system \eqref{eq:ctrsys}. These control laws always violate the positivity constraint. Hence, in order to implement them, we can design a rational feedback law that respects the positivity constraints but has the same effect as the corresponding linear feedback, as shown by Theorem \ref{ratio}.
The following result is well-known \cite{dullerud2013course} in literature on Linear Matrix Inequality (LMI) based tools for construction of linear control laws:
\begin{theorem}
Let $\mathbf{A} \in \mathbb{R}^{M \times M}$ and $\mathbf{B} \in \mathbb{R}^{M \times N_{\mathcal{E}}}$, where  $N_{\mathcal{E}}$ is the number of control inputs. Consider the linear control system
\begin{eqnarray}
\dot{\mathbf{x}}(t) &=& \mathbf{A}\mathbf{x}(t)+\mathbf{B}\mathbf{u}(t). \label{eq:ABsys}
\label{eq:Linsys}
\end{eqnarray}
Then a static linear state feedback law, $\mathbf{u} = -\mathbf{K}\mathbf{x}$, stabilizes the system \eqref{eq:Linsys} if and only if there exist matrices $\mathbf{P}>0$ and $\mathbf{Z}$ such that
\begin{equation}
\mathbf{K} = \mathbf{Z}\mathbf{P}^{-1},
\end{equation}
\begin{equation}
\mathbf{P}\mathbf{A} + \mathbf{B}\mathbf{Z} + \mathbf{P}\mathbf{A}^T +\mathbf{Z}^T\mathbf{B}^T < 0.
\label{eq:mnlmi}
\end{equation}
\end{theorem}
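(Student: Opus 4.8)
The plan is to reduce the stabilization requirement to the Hurwitz property of the closed-loop matrix, invoke the classical Lyapunov characterization of Hurwitz matrices, and then convert the resulting quadratic (bilinear) matrix inequality into the stated LMI by a congruence transformation combined with a linearizing change of variables.

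First I would substitute the static feedback $\mathbf{u} = -\mathbf{K}\mathbf{x}$ into \eqref{eq:Linsys} to obtain the closed-loop system $\dot{\mathbf{x}} = (\mathbf{A} - \mathbf{B}\mathbf{K})\mathbf{x}$, and note that for a linear time-invariant system asymptotic (equivalently exponential) stability holds if and only if $\mathbf{A}_{cl} := \mathbf{A} - \mathbf{B}\mathbf{K}$ is Hurwitz. I would then recall the Lyapunov theorem: $\mathbf{A}_{cl}$ is Hurwitz if and only if there is a symmetric $\mathbf{Q} > 0$ solving the strict Lyapunov inequality $\mathbf{A}_{cl}^T\mathbf{Q} + \mathbf{Q}\mathbf{A}_{cl} < 0$, with $V(\mathbf{x}) = \mathbf{x}^T\mathbf{Q}\mathbf{x}$ acting as a quadratic Lyapunov function. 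This simultaneously yields both implications of the ``if and only if,'' provided the inequality can be rephrased in the variables $(\mathbf{P},\mathbf{Z})$.

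The decisive step linearizes the bilinearity. The inequality $(\mathbf{A} - \mathbf{B}\mathbf{K})^T\mathbf{Q} + \mathbf{Q}(\mathbf{A} - \mathbf{B}\mathbf{K}) < 0$ is not yet an LMI because of the product $\mathbf{Q}\mathbf{B}\mathbf{K}$. I would set $\mathbf{P} = \mathbf{Q}^{-1}$, which is well defined since $\mathbf{Q} > 0$ and satisfies $\mathbf{P} > 0$ exactly when $\mathbf{Q} > 0$, and perform a congruence transformation by multiplying the inequality on both sides by the symmetric invertible matrix $\mathbf{P}$, an operation that preserves negative definiteness. Using $\mathbf{Q}\mathbf{P} = \mathbf{P}\mathbf{Q} = \mathbf{I}$, this produces $\mathbf{A}\mathbf{P} + \mathbf{P}\mathbf{A}^T - \mathbf{B}\mathbf{K}\mathbf{P} - \mathbf{P}\mathbf{K}^T\mathbf{B}^T < 0$. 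Introducing the change of variables $\mathbf{Z} = \mathbf{K}\mathbf{P}$, so that $\mathbf{K} = \mathbf{Z}\mathbf{P}^{-1}$ as in the statement, removes the offending product and delivers the asserted LMI in the jointly free variables $\mathbf{P} > 0$ and $\mathbf{Z}$ (up to the sign convention of the feedback, which only fixes whether $\mathbf{Z} = \mathbf{K}\mathbf{P}$ or $\mathbf{Z} = -\mathbf{K}\mathbf{P}$).

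The main obstacle, and the entire motivation for the construction, is exactly this bilinearity $\mathbf{Q}\mathbf{B}\mathbf{K}$: searching directly over $(\mathbf{Q},\mathbf{K})$ is a non-convex feasibility problem, whereas after the congruence and substitution the feasible set in $(\mathbf{P},\mathbf{Z})$ is convex. I would therefore argue the equivalence carefully in both directions --- that feasibility of the LMI reconstructs a stabilizing gain via $\mathbf{K} = \mathbf{Z}\mathbf{P}^{-1}$ and $\mathbf{Q} = \mathbf{P}^{-1}$, and conversely that any stabilizing $\mathbf{K}$ yields a feasible pair --- the only subtlety being to confirm that congruence by $\mathbf{P}$ is reversible, which is guaranteed by $\mathbf{P} > 0$, so that no solutions are created or lost in passing between the two formulations.
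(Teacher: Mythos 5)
Your proof is correct, and it is exactly the standard argument underlying this result, which the paper itself does not prove at all (it is stated as well-known and attributed to the reference \cite{dullerud2013course}): the Lyapunov characterization of the Hurwitz property of $\mathbf{A}-\mathbf{B}\mathbf{K}$, followed by congruence with $\mathbf{P}=\mathbf{Q}^{-1}$ and the linearizing substitution $\mathbf{Z}=\mathbf{K}\mathbf{P}$, is precisely the textbook derivation. You were also right to flag the sign convention: with the paper's feedback $\mathbf{u}=-\mathbf{K}\mathbf{x}$, consistency with the displayed inequality requires $\mathbf{K}=-\mathbf{Z}\mathbf{P}^{-1}$ (equivalently, replace $\mathbf{B}\mathbf{Z}$ by $-\mathbf{B}\mathbf{Z}$), and the paper's expression $\mathbf{P}\mathbf{A}+\mathbf{P}\mathbf{A}^{T}$ is a typo for the symmetric form $\mathbf{A}\mathbf{P}+\mathbf{P}\mathbf{A}^{T}$, which is what your congruence step correctly produces.
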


The above theorem can used to construct state feedback laws for a general linear system. The theorem is attractive from a computational point of view since the constraints, $\mathbf{P}>0$ and Equation \eqref{eq:mnlmi}, are convex in the decision variables $\mathbf{Z}$ and $\mathbf{P}$. In order to ensure that the resulting control law is decentralized, we need to impose additional constraints. Toward this end, let $\mathcal{D} \subset \mathbb{R}^{M \times M}$ be the subset of diagonal positive definite matrices. Additionally, let $\mathcal{Z} = \lbrace \mathbf{Y} \in \mathbb{R}^{N_\mathcal{E} \times M} : Y^{ij} = 0  \hspace{1mm}$ if $i \neq j$ and  $(i,j) \in \mathcal{V} \times \mathcal{V}- \mathcal{E}  \rbrace$.  Then the additional constraints $\mathbf{Z} \in \mathcal{Z}$ and $\mathbf{P} \in \mathcal{D}$
%\begin{eqnarray}
%\mathbf{Z} \in \mathcal{Z} \\ \nonumber
%\mathbf{P} \in \mathcal{D}
%\label{eq:decent}
%\end{eqnarray}
can be imposed to achieve the desired decentralized structure in the controller. Note that these two constraints are convex and can be expressed as LMIs. Hence, state-of-the-art LMI solvers can be used to construct control laws. It follows from Lemma \ref{linstab} that the constraints $\mathbf{Z} \in \mathcal{Z}$ and $ \mathbf{P} \in \mathcal{D}$ are always feasible for linearizations of the system \eqref{eq:ctrsys}. %This follows from the lemma \ref{linstab}.
The existence of a quadratic Lyapunov function can also be inferred from this lemma, since the linearization of the closed-loop nonlinear system in Lemma \ref{linstab} is always the forward equation of an irreducible CTMC.

The above method of constructing decentralized control laws is well-known in the literature. However, a necessary and sufficient condition for the above LMIs to be feasible is that the system $(\mathbf{A},\mathbf{B})$ (Equation \eqref{eq:ABsys}) is stabilizable. This system is not stabilizable on $\mathbb{R}^{M}$, since the uncontrollable eigenvalue is at zero and the set $\mathcal{P}(\mathcal{V})$ is invariant for system \eqref{eq:ctrsys}. However, we require stability only on $\mathcal{P}(\mathcal{V})$. To deal with the lack of stabilizability on $\mathbb{R}^M$, an alternative approach from model reduction is to artificially place the uncontrollable eigenvalue of the linear system in the open left half of the complex plane. To see this explicitly, let $(\mathbf{A},\mathbf{B})$ be a partially controllable system. Then there exists a nonsingular matrix $\mathbf{T} \in \mathbb{R}^{M \times M}$ such that
\begin{equation}
\tilde{\mathbf{A}} = \mathbf{T}\mathbf{A}\mathbf{T}^{-1} =
\begin{bmatrix}
    \tilde{\mathbf{A}}_{11} & \tilde{\mathbf{A}}_{12} \\
    \mathbf{0}              & \tilde{\mathbf{A}}_{22}
\end{bmatrix}, \hspace{2mm}
\tilde{\mathbf{B}} = \mathbf{T}\mathbf{B} =
\begin{bmatrix}
    \tilde{\mathbf{B}}\\
    \mathbf{0}
\end{bmatrix}.
\end{equation}
In our case, $\tilde{\mathbf{A}}=\mathbf{A}=\mathbf{0}$. However, this transformation is needed to convert $\tilde{\mathbf{B}}$ into the desired form. In order to design a controller for the system $(\mathbf{A},\mathbf{B})$, we can instead design a controller for another artificial system,
\begin{equation}
\tilde{\mathbf{A}}_{\epsilon} = \mathbf{T}\mathbf{A}\mathbf{T}^{-1} =
\begin{bmatrix}
    \tilde{\mathbf{A}}_{11} & \tilde{\mathbf{A}}_{12} \\
    \mathbf{0}                & -\epsilon \mathbf{I}
\end{bmatrix}, \hspace{2mm}
\tilde{\mathbf{B}} = \mathbf{T}\mathbf{B} =
\begin{bmatrix}
    \tilde{\mathbf{B}}\\
    \mathbf{0}
\end{bmatrix}
\end{equation}
for some $\epsilon > 0$, where $\mathbf{I}$ is the identity matrix of appropriate dimension. Note that the new artificial system, $(\tilde{\mathbf{A}}_{\epsilon},\tilde{\mathbf{B}})$, has the same controllable eigenvalues as the original system, $(\tilde{\mathbf{A}},\tilde{\mathbf{B}})$, and all its uncontrollable eigenvalues are stable. We can perform the inverse transformation to represent the artificial system in the original coordinates:
\begin{equation}
\mathbf{A}_{\epsilon}= \mathbf{T}^{-1}\tilde{\mathbf{A}}_{\epsilon}\mathbf{T}.
\end{equation}
Now let $\mathbf{F}$ be a feedback control law that stabilizes the system $(\mathbf{A}_{\epsilon},\mathbf{B})$.  Then the following relation is satisfied, where $\sigma(\mathbf{M})$ is the spectrum of matrix $\mathbf{M}$:
\begin{equation}
\sigma(\mathbf{A}+\mathbf{B}\mathbf{F}) \backslash \sigma(\tilde{\mathbf{A}}_{22}) = \sigma(\mathbf{A}_{\epsilon}+\mathbf{B}\mathbf{F}) \backslash \lbrace \epsilon \rbrace.  \nonumber
\end{equation}
This relation is advantageous from a computational point of view, since one can now directly impose the constraint of decentralized structure on the gain matrix $\mathbf{F}$ by designing the controller for the stabilizable artificial system $(\mathbf{A}_{\epsilon},\mathbf{B})$ and then implementing it on the original system.
%transformation?}\kar{we implement the same control law..not a transformation..see equation after eq.26}
 %This is in contrast with
Alternatively, if we were to first reduce the original system to a controllable lower-dimensional system and then design the controller for this reduced-order system, the structural constraint would be harder to impose.

Combining all the LMIs described, we obtain the following system of LMIs that need to be tested for feasibility:
\begin{eqnarray}
\mathbf{P} \in \mathcal{D}, ~~ \mathbf{Z} \in \mathcal{Z}, ~~ \mathbf{P}>0,  \\ \nonumber
\mathbf{P}\mathbf{A}_{\epsilon} + \mathbf{B}\mathbf{Z} + \mathbf{P}\mathbf{A}_{\epsilon}^T +\mathbf{Z}^T\mathbf{B}^T<0.
\end{eqnarray}

\section{NUMERICAL SIMULATIONS}

In this section, we numerically verify the effectiveness of decentralized feedback controllers that we compute for two bidirected graphs.  For both graphs, the numerical solution of the mean-field model \eqref{eq:fctrsys} was compared to stochastic simulations of the CTMC characterized by expression \eqref{eq:marko}. This CTMC was simulated using an approximating DTMC that evolves in discrete time, where the probability that an agent $i$ at vertex $S(e)$, $e \in \mathcal{E}$, at time $t$ transitions to vertex $T(e)$ at time $t+\Delta t$ was set to:
%\spr{(Why are there tildes over the $X_i$?)}\kar{the tilde denotes an approximating process, which is a DTMC, and not the same as $X_i$ strictly speaking....minor issue can use same variable if you want}
%For the time discretization parameter $\Delta t$, the probability of transitioning to one vertex to another at time $t+\Delta t$ was set to be
\begin{align*}
\mathbb{P}(\tilde{X}_i&(t+\Delta t) = T(e) | \tilde{X}_i(t) = S(e))  \nonumber \\
\hspace{-5mm} =& ~ k_e(\tfrac{1}{N}\mathbf{N}(t)) \Delta t
 ~=~ c_e\left(\tfrac{1}{N}N_{S(e)}(t),\tfrac{1}{N}N_{T(e)}(t)\right)\Delta t. \nonumber
\end{align*}
%for each $e \in \mathcal{E}$ and each $i \in \mathcal{V}$.
Here, $c_e:\mathbb{R}^2 \rightarrow [0,\infty)$ is the feedback law for agents transitioning along edge $e$, and $\mathbf{N}(t) = [N_1(t)~N_2(t)~...~N_M(t)]^T$ is the vector of agent populations at each vertex at time $t$. The control law $k_e$ is decentralized in the sense that the probability rate of an agent performing the transition associated with edge $e$ depends only on the agent populations at the corresponding source vertex $S(e)$ and target vertex $T(e)$. Here, we are assuming that each agent can measure the agent populations at its current vertex and adjacent vertices.

%\spr{**Explain why this is a ``decentralized controller'' (means that $c_e$ only depends on $N_{S(e)}(t)$ and $N_{T(e)}(t)$?).  State what the agents have to measure/obtain information on in order to execute the control law ($N_{S(e)}(t)$ and $N_{T(e)}(t)$?).}

%\spr{State how the agents would execute the control law.  Compute a uniform random number between 0 and 1 each time step, then switch vertices if this number is $\leq c_e$?} %\kar{New definition fixes what decentralized means.}

% $N_{S(e)}(t)$ is the number of agents at vertex $S(e)$ at each time $t$

%and $N^p_i(t)$ is the total number of agents in vertex $i\in\mathcal{V}$ at time $t$. $\lbrace c_e \rbrace_{e \in \mathcal{E}}$ is the feedback law.

%The first scenario corresponds to the four-vertex bidirected graph shown in Fig. \ref{fig:fournode}. The second scenario corresponds to a nine-vertex bidirected graph which has a structure as shown Fig. \ref{fig:ninenode}.

\subsection{Four-vertex graph}
We computed three different types of controllers to redistribute populations of $N=50$ and $N=500$ agents on the four-vertex chain graph in Fig. \ref{fig:fournode}.  The first controller ({\it Case 1}) was a reference open-loop time-invariant controller; the second ({\it Case 2}) was a feedback controller constructed as described in Lemma \ref{linstab}; and the third controller ({\it Case 3}) was designed using the LMI-based computational approach discussed in Section \ref{sec:comp}. The controller for {\it Case 3} was constructed by setting the right-hand side of system \eqref{eq:ctrsys} equal to $\mathbf{G}\mathbf{x} = -\mathbf{L(\mathcal{G})}\mathbf{D}\mathbf{x}$, where $\mathbf{L(\mathcal{G})}$ is the Laplacian matrix of the graph $\mathcal{G}$ and $\mathbf{D}$ is a diagonal matrix with entries $D^{ij} = 1/x^{eq}_i$ if $i=j$, $D^{ij} = 0$ otherwise. This makes the desired distribution $\mathbf{x}^{eq}$ invariant for the corresponding CTMC. The transition rates (control inputs) for the {\it Case 3} controller were defined as $u_e(t)=G^{T(e)S(e)}$ for all $t \in [0,\infty)$, $e \in \mathcal{E}$.
In all three cases, the initial distribution was $\mathbf{x}^0 = \tfrac{1}{N}\mathbf{N}(0) = [0.7~ 0.1~ 0.1~ 0.1]^T $, and the desired distribution was $\mathbf{x}^{eq} = [0.1~ 0.1~ 0.1~ 0.7]^T$.

\begin{figure}[t!]
	\centering
	\includegraphics[width= \linewidth]{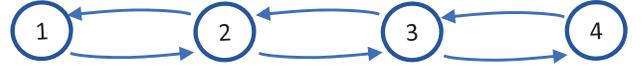}
	\caption{Four-vertex graph with one-dimensional grid structure.}
	\label{fig:fournode}
\end{figure}

 %$3$ different cases were simulated. Each case differed from the other by the type of controller used.
 % Each of $3$ controllers were tested on a $4$ vertex chain-graph with $50$ agents and $500$ agents.
%whose performance was compared with the feedback controller (case $2$) constructed in lemma \ref{linstab}
%and a controller designed (case $3$) using the LMI based computational approach discussed in section \ref{sec:comp}.
%\[
% D^{ij} =
%  \begin{cases}
%   1/x^{eq}_i & \text{if } i = j\\
%   0       & \text{otherwise}
%  \end{cases}
%\]

The solution of the mean-field model and trajectories of the stochastic simulation are compared in Fig. \ref{fig:openloop}-\ref{fig:lmi500}.  Fig. \ref{fig:openloop} shows that the open-loop controller ({\it Case 1}) produces large variances in the agent populations at the vertices at steady-state.  As an expected consequence of the law of large numbers, these variances are smaller for $N=500$ agents than for $N=50$ agents. In comparison, the variances are much smaller when the {\it Case 2} and {\it Case 3} feedback controllers are used, as shown in Fig. \ref{fig:prop50} and \ref{fig:lmi500}. This is due to the property of the feedback controllers that as the agent densities approach their desired equilibrium values, the transition rates tend to zero. This property reduces the number of unnecessary agent state transitions at equilibrium. %fluctuations in agent positions.
This effect is shown explicitly in Fig. \ref{fig:robotmotion}, which plots the time evolution of a single agent's state (vertex number) during a stochastic simulation with each of the three controllers. As expected, using open-loop control, the agent's state keeps switching between  vertices and never reaches a steady-state value. In contrast, using the feedback controllers, the agent's state remains constant after a certain time. % (almost immediately when the LMI-based controller is used)\kar{//this  was lucky..not really because we designed it//}.

%sample average is comparatively smaller at steady state for the case with 500 agents than for 50 agents.
%In comparison, the fluctuations in robot positions when feedback was used is much smaller, as can be seen for the 50 particle case in figure \ref{fig:prop50} and for the 500 particle case in Fig. \ref{fig:lmi500}.
%the nature of the feedback control laws: as the agent density gets closer to equilibrium the transition rates tend to zero.
 %has been plotted against time for each of the different cases.
%  positions remain constant after some time.

%result of the stochastic simulation along with simulation of the ODE model is shown  for four of the sub-cases as shown in the figures, \ref{fig:openloop50}-\ref{fig:lmi500}.
%As shown for the open-loop controller, in Fig. \ref{fig:openloop50} and Fig. \ref{fig:openloop500}, at steady state distribution, there is a large amount of variance in agent positions.

\begin{figure}[t]
	\centering	
	\subfigure[$N=50$ agents \label{fig:openloop50}]{\includegraphics[width=\linewidth]{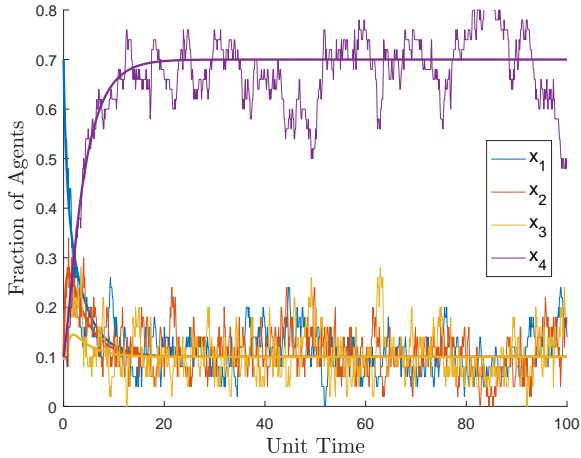}}
     \subfigure[$N=500$ agents \label{fig:openloop500}]{\includegraphics[width=\linewidth]{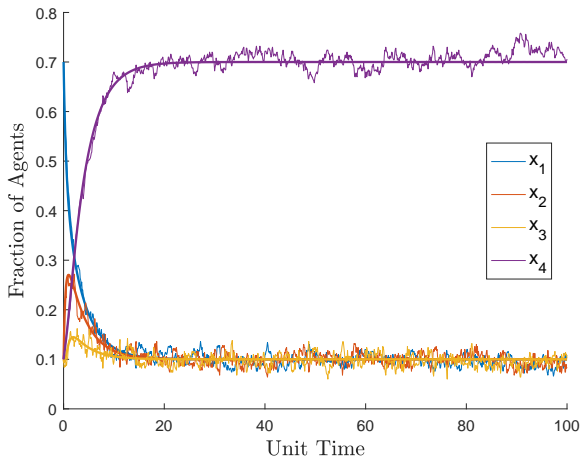}}			
    \caption{Trajectories of the mean-field model \eqref{eq:fctrsys} {\it (thick lines)} and the corresponding stochastic simulation {\it (thin lines)} for the {\it Case 1} open-loop controller.}
	\label{fig:openloop}
\end{figure}

%\begin{figure}[t]
%	\centering
%	\includegraphics[width= \linewidth]{openloop50.eps}
%	\caption{Trajectories of the mean-field model \eqref{eq:fctrsys} \spr{scaled by $N$} {\it (thin lines)} and the corresponding stochastic simulation {\it (thick lines)} for the {\it Case 1} controller with $N=50$ agents.} %The lines from any one state share the same color code.}
%	\label{fig:openloop50}
%\end{figure}
%
%\begin{figure}[t]
%	\centering
%	\includegraphics[width= \linewidth]{openloop500.eps}
%	\caption{Trajectories of the mean-field model \eqref{eq:fctrsys} \spr{scaled by $N$} {\it (thin lines)} and the corresponding stochastic simulation {\it (thick lines)} for the {\it Case 1} controller with $N=500$ agents.}
%	\label{fig:openloop500}
%\end{figure}

\begin{figure}[t]
	\centering
	\includegraphics[width= \linewidth]{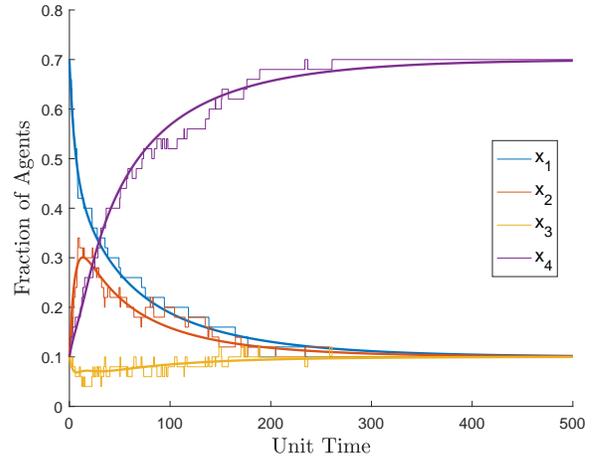}
	\caption{Trajectories of the mean-field model \eqref{eq:fctrsys} {\it (thick lines)} and the corresponding stochastic simulation {\it (thin lines)} for the {\it Case 2} controller from Lemma  \ref{linstab} with $N=50$ agents.}
%	theorem \ref{linstab} (case 2) with 50 particles}
	\label{fig:prop50}
\end{figure}

%\begin{figure}[h!]
%	\centering
%	\includegraphics[width= \linewidth]{prop500.eps}
%	\caption{Stochastic Simulation of Proportional Controller with 500 particles}
%	\label{fig:prop500}
%\end{figure}

%\begin{figure}[h!]
%	\centering
%	\includegraphics[width= \linewidth]{lmi50.eps}
%	\caption{Stochastic Simulation of LMI based controller with 50 particles}
%	\label{fig:lmi50}
%\end{figure}

\begin{figure}[t]
	\centering
	\includegraphics[width= \linewidth]{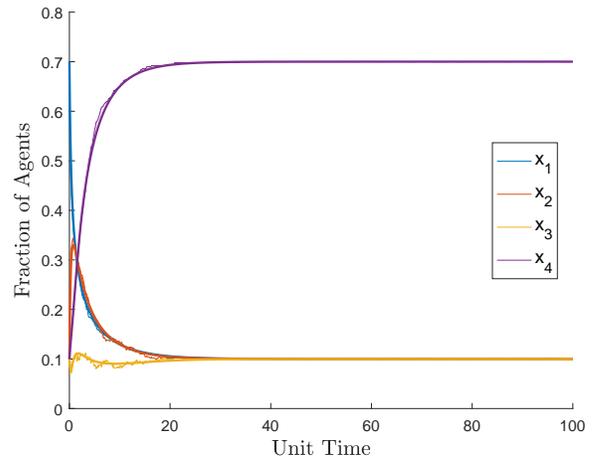}
	\caption{Trajectories of the mean-field model \eqref{eq:fctrsys} {\it (thick lines)} and the corresponding stochastic simulation {\it (thin lines)} for the {\it Case 3} LMI-based controller with $N=500$ agents.}
%	Stochastic Simulation of LMI based controller (case 3) with 500 particles}
	\label{fig:lmi500}
\end{figure}

\begin{figure}[t]
	\centering
	\includegraphics[width= \linewidth]{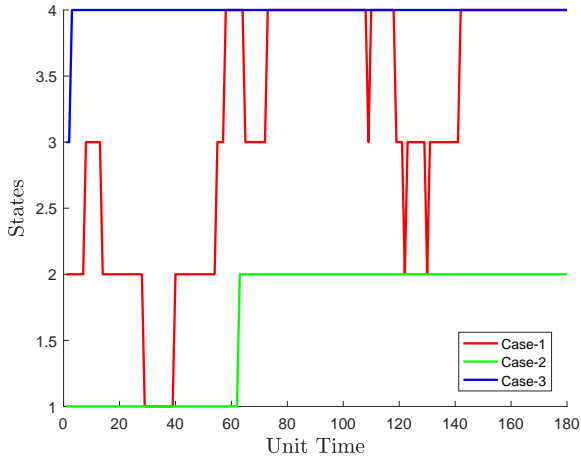}
	\caption{State (vertex number) of a single agent over time during stochastic simulations of system \eqref{eq:fctrsys} with the {\it Case 1}, {\it Case 2}, and {\it Case 3} controllers.}
	\label{fig:robotmotion}
\end{figure}

\subsection{Nine-vertex graph}
%
%%Agents Used: 1000 \\
%(Fig. \ref{fig:ninenode})

Here we demonstrate that our LMI-based design of decentralized control laws can be scaled with the number of vertices and control inputs.   %\spr{Say something about the complexity of the controller computation, if we can. 9 vertices vs. 4 vertices isn't scaled up by a lot...}\kar{I am not sure what the complexity is..}
We computed an LMI-based controller to redistribute $N=500$ agents on the nine-vertex graph in Fig. \ref{fig:ninenode} and ran a stochastic simulation of the resulting control system. Fig. \ref{fig:sim1} illustrates the initial distribution of the agents on a two-dimensional domain, in which each partitioned region corresponds to a vertex in Fig. \ref{fig:ninenode}.  We assume that agents switch between regions instantaneously once they decide to execute a transition.  Fig. \ref{fig:sim2} shows the distribution of the agents at $t=300$, which matches the desired equilibrium distribution.
%after $500$ iterations.

% In each iteration, we assume that the robots have completed their transition from one state to the other. Hence, spatial effects are assumed to be negligible between transitions.

 %of a 500-agent population distributed on a partitioned $2$-D domain  for visual depiction.

%The LMI based design of decentralized control laws can also be scaled up for systems with higher number of vertices and inputs as shown in this example.

We note that the controller designed using LMIs is not guaranteed to be globally stabilizing, and might also not keep which implies that the set $\mathcal{P}(\mathcal{V})$ is not necessarily invariant for the system. As a result, we found that the LMI-based controller sometimes did not stabilize the mean-field model if the initial agent distribution was too far away from the equilibrium distribution.  On the other hand, the controller from Lemma \ref{linstab} was not subject to this limitation. 
%\spr{(Any disadvantage to that controller? Tradeoffs?)}\kar{The key is that the LMI based controller is computed numerically and is more appropriate for extensions in future work to design and improve performance. As such we cannot compare the two feedback controllers since these are just two different controllers computed using different methods without any design element in the process.}
% is that it's slower to reach the desired steady-state (Fig. \ref{fig:robotmotion}))?}\kar{Not true..we could multiply the controller by a common high valued gain..and it might converge faster}
Hence, additional constraints might need to be imposed on the LMI-based controller design to guarantee global stability.

%This caused some issues \spr{what kind?}\kar{explained in next sentence} in numerical simulations, since it implies that the set $\mathcal{P}(\mathcal{V})$ is not necessarily invariant for the system. Indeed, we found that the LMI-based controller sometimes did not stabilize the mean-field model if the initial agent distribution was too far away from the equilibrium distribution.

%\begin{figure}[h!]
%	\centering
%	\includegraphics[width= 0.8\linewidth]{}
%	\caption{\spr{Snapshot at $t= 0$ of an agent-based} simulation for the nine-vertex graph. The initial agent distribution among the nine regions (vertices) is $\mathbf{x}^0 = [0.6~~ 0.05  ~~ 0.05  ~~ 0.05  ~~ 0.05  ~~ 0.05  ~~ 0.05  ~~ 0.05  ~~ 0.05  ~~ 0.05]^T$.}
%	\label{fig:sim1}
%\end{figure}
%
%\begin{figure}[h!]
%	\centering
%	\includegraphics[width= 0.8\linewidth]{}
%	\caption{Particle Simulation of 9 vertex system at t= 300s. Target Distribution: $[0.04 ~~ 0.04 ~~ 0.04 ~~0.25 ~~0.25 ~~ 0.26 ~~0.04 ~~0.04 ~~ 0.04]^T$}
%	\label{fig:sim2}
%\end{figure}

\begin{figure}[t]
\label{fig:graph2}
	\centering
	\includegraphics[scale = 0.5]{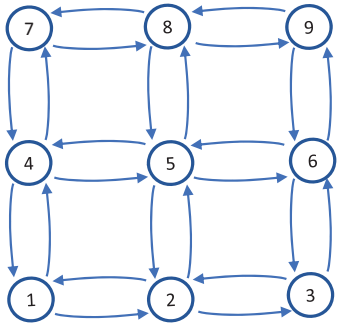}
	\caption{Nine-vertex graph with two-dimensional grid structure.} % \spr{Numbering of the vertices?}
	\label{fig:ninenode}
\end{figure}

\begin{figure}[t]
	\centering	
	\subfigure[$t=0$ \label{fig:sim1}]{\includegraphics[width=0.65\linewidth]{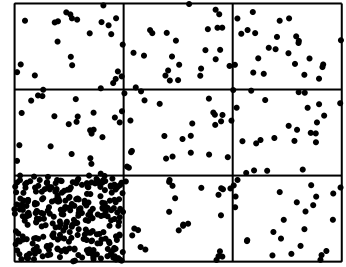}}
     \subfigure[$t=300$ \label{fig:sim2}]{\includegraphics[width=0.65\linewidth]{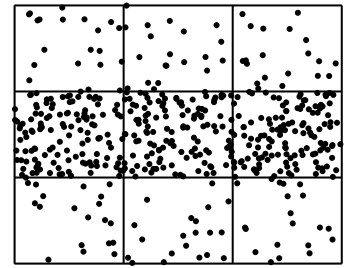}}			
    \caption{Snapshots of agent positions during a stochastic simulation for the nine-vertex graph. The initial and desired agent distributions among the nine regions (vertices) are $\mathbf{x}^0 = [0.6~~ 0.05  ~~ 0.05  ~~ 0.05  ~~ 0.05  ~~ 0.05  ~~ 0.05  ~~ 0.05  ~~ 0.05  ~~ 0.05]^T$ and $\mathbf{x}^{eq} = [0.04 ~~ 0.04 ~~ 0.04 ~~0.25 ~~0.25 ~~ 0.26 ~~0.04 ~~0.04 ~~ 0.04]^T$.}
	\label{fig:sims}
\end{figure}

\section{CONCLUSION}

In this paper, we have proven local and global controllability properties of the forward equation of CTMCs. Moreover, we presented novel approaches to designing decentralized linear feedback control laws for a robotic swarm whose distribution among a set of tasks, represented by a graph, is governed by this equation. Since linear feedback laws violate positivity constraints, it was shown that for bidirected graphs, linear feedback laws can be realized using rational feedbacks that have the same decentralized structure. These feedback laws ensure that agents do not unnecessarily switch between tasks once the desired equilibrium distribution is reached.
%also investigated the possibility of designing decentralized linear feedback control laws.

Future work will focus on questions regarding the optimality of these control laws and the possibility of stabilization for general strongly connected graphs.  In a companion paper \cite{biswal2016sos}, we prove asymptotic controllability of points on the boundary of the simplex $\mathcal{P}(\mathcal{V})$, and we plan to further investigate stabilization to these points.  As we also show in \cite{biswal2016sos}, for bidirected graphs, interior points of the simplex can be globally stabilized using decentralized quadratic feedback laws.  This property enables the use of sum-of-squares based computational polynomial optimization methods to construct control laws. %\spr{Future work on this?}\kar{Do you mean if SoS based computation is future work? That is current work \cite{biswal2016sos}. Otherwise in \cite{biswal2016sos} we can mention what future extensions of that work are}.

%%%%%%%%%%%%%%%%%%%%%%%%%%%%%%%%%%%%%%%%%%%%%%%%%%%%%%%%%%%%%%%%%%%%%%%%%%%%%%%%

\bibliographystyle{plain}

\bibliography{cdcref}

\end{document}